\theoremstyle{plain}
\newtheorem{theorem}{Theorem}[section]
\newtheorem{lemma}[theorem]{Lemma}
\newtheorem{corollary}[theorem]{Corollary}
\theoremstyle{definition}
\newtheorem{definition}[theorem]{Definition}
\newtheorem{example}[theorem]{Example}
\theoremstyle{remark}
\newtheorem{remark}[theorem]{Remark}
\newcommand{\Z}{\mathbb{Z}}
\newcommand{\R}{\mathbb{R}}
\newcommand{\RR}{\mathrm{R}}
\newcommand{\A}{\mathcal{A}}
\newcommand{\D}{\mathit{D}}
\renewcommand{\SS}{\mathcal{S}}
\newcommand{\col}{\mathrm{col}}
\newcommand{\fg}{\mathrm{fg}}
\newcommand{\incl}{\mathrm{incl}}
\newcommand{\SO}{\mathrm{SO}}
\renewcommand{\Im}{\operatorname{Im}}
\newcommand{\Map}{\mathrm{Map}}
\newcommand{\Hom}{\mathrm{Hom}}
\newcommand{\ang}[1]{\langle#1\rangle}
\title{Homotopy classification of knotted defects in ordered media}
\author[1,5]{Yuta Nozaki}
\author[2,5]{Tam\'{a}s K\'{a}lm\'{a}n}
\author[3,5]{Masakazu Teragaito}
\author[4,5]{Yuya Koda}
\affil[1]{\footnotesize Faculty of Environment and Information Sciences, Yokohama National University, Yokohama 240-8501, Japan;
\texttt{nozaki-yuta-vn@ynu.ac.jp}}
\affil[2]{\footnotesize Department of Mathematics, Tokyo Institute of Technology, 
Tokyo 152-8551, Japan;
\texttt{kalman@math.titech.ac.jp}}
\affil[3]{\footnotesize Department of Mathematics Education,
Hiroshima University, Higashi-Hiroshima 739-8524, Japan;
\texttt{teragai@hiroshima-u.ac.jp}}
\affil[4]{\footnotesize Department of Mathematics, Hiyoshi Campus, Keio University, Yokohama 223-8521, Japan;
\texttt{koda@keio.jp}}
\affil[5]{\footnotesize 
International Institute for Sustainability with Knotted Chiral Meta Matter (WPI-SKCM$^2$), Hiroshima University, Higashi-Hiroshima 739-0046, Japan.}
\begin{document}
\date{}
\maketitle

\begin{abstract}
We give a homotopy classification of the global defects in ordered media, and explain it 
via the example of biaxial nematic liquid crystals, i.e., systems where the order parameter space is the quotient of the $3$-sphere $S^3$ by the quaternion group $Q$.
As our mathematical model we consider continuous maps from
complements of spatial graphs to the space $S^3/Q$ modulo a certain equivalence relation, and find that the equivalence classes are enumerated by the six subgroups of $Q$.
Through monodromy around meridional loops, 
the edges of our spatial graphs are marked by conjugacy classes of $Q$; once we pass to planar diagrams, these labels can be refined to elements of $Q$ associated to each arc.
The same classification scheme applies not only in the case of $Q$ but also to arbitrary groups.
\end{abstract}

\section{Introduction}
Loops in a $3$-dimensional space can be knotted and linked in complicated ways.
Such objects are called knots or links in mathematics, more specifically in topology, and they have been extensively studied 
(see Rolfsen~\cite{Rol90} and Kawauchi~\cite{Kaw96}, for instance).
Since the discovery of the Jones polynomial \cite{Jon87} there has been a renewed attention to the relation between knot theory and theoretical physics, 
cf.\ Witten~\cite{Wit89}, Atiyah~\cite{Ati90}, and Kauffman~\cite{Kau13}.
But in fact, we often find knots in physical systems in more straightforward ways. Examples include
disclination lines of liquid crystals \cite{SLCT10,TRCZM11, SCZ14},
the cores of vortices in turbulent water \cite{KlIr13}
and superfluids \cite{KKI16},
as well as knottedness in optical and acoustic fields \cite{DKJOP10, KZLGZZZZ22,ZZLZLHZ20}.
Furthermore, knots appear as polymers \cite{SIT19},
strands of DNA \cite{HPLY10}, and
Skyrmion cores in classical field theory \cite{FaNi97}.
Note here that strands of polymers cannot cross, while vortex lines could cross without violating the topology. 

In this paper, we mainly focus on \emph{biaxial nematic liquid crystals} with $1$-dimensional defects. 
Here, a biaxial nematic liquid crystal is a material consisting of molecules that can be approximated by an ellipse, or some other two-dimensional shape with two discernible perpendicular symmetry axes.
The motion of the molecules is only constrained by the requirement that nearby principal axes (also called \emph{directors}) tend to align. It is well known and well observed that these systems can contain singularities in the form of one-dimensional loci which are characterized by non-trivial twisting around small meridional loops encircling them. That twisting prevents us from defining some of the director fields along the defect in a continuous manner. Because the directors are inequivalent, this can happen in several different ways which gives rise to various types of defects. The disclinations cannot terminate in the bulk but  it is possible for multiple defect lines to emanate from the same point. As long as such branching is at most three-fold, the resulting structure is robust in the sense that small perturbations will only give rise to small isotopies of the defect set.

Since the symmetries of an ellipse form the dihedral group $\D_2$ of order $4$, the positioning of a single molecule is parametrized by the quotient space $\SO(3)/\D_2$.
Note that $\D_2$ is isomorphic to $\Z_2\times\Z_2$, where $\Z_2$ is the cyclic group of order $2$.
Thus, a biaxial nematic liquid crystal texture (far from any boundary constraints) can be mathematically formulated as a continuous map $\R^3\setminus\Gamma \to \SO(3)/\D_2$, where the discontinuity $\Gamma$ is a spatial graph called the \emph{defect} of the texture.
In this context we refer to $\SO(3)/\D_2$ as the \emph{order parameter space} and we note that its fundamental group $\pi_1(\SO(3)/\D_2)$ is isomorphic to the \emph{quaternion group} $Q=\{\pm 1, \pm i, \pm j, \pm k\}$ (see Section~\ref{subsec:quaternion}).
A spatial graph is a finite, knotted graph in a $3$-dimensional space. Such objects have been studied in the context of knot theory and its applications (see Flapan~\cite{Fla00}, for instance).

There are many other kinds of physical systems described by an order parameter space. 
Local structures of line defects of such systems are completely classified by conjugacy classes (except that of the identity) of the fundamental group of the order parameter space.
We refer the reader to Po\'{e}naru--Toulouse~\cite{PoTo77}, Kl\'{e}man~\cite{Kle77}, Volovik--Mineev~\cite{VoMi77}, Mermin~\cite{Mer79}, Monastyrsky--Retakh~\cite{MoRe86}, Nakanishi--Hayashi--Mori~\cite{NHM88}, Brekke~et.~al.~\cite{BDHI92} and Masaki--Mizushima--Nitta~\cite{MMN24}.
See also the recent papers \cite{PBS20}, \cite{AlKa22}, and \cite{ZTWS23} on related topics. 
In our case, $Q$ is partitioned into the conjugacy classes $\{1\}$, $\{-1\}$, $\{\pm i\}$, $\{\pm j\}$, and $\{\pm k\}$. The latter four correspond to the four types of defects found in biaxial nematic liquid crystals. 

Let us now turn our attention to the global structure of the defect.
For a given link $L$, Machon and Alexander~\cite{MaAl14,MaAl16} investigated uniaxial nematic liquid crystals with defect $L$ and found that up to a natural equivalence they are enumerated by the so-called determinant of $L$.
Recently, Annala, Zamora-Zamora, and M\"{o}tt\"{o}nen~\cite{AZM22} considered interactions between defects called topologically allowed strand crossings and reconnections.
They classified biaxial nematic liquid crystal textures, with non-branching defects, up to these local moves.
For slightly different liquid crystals, Rajam\"{a}ki, Annala, and M\"{o}tt\"{o}nen~\cite{RAM23} obtained a partial result on the classification of defects.

In this paper we study, based on physical considerations (see, e.g., \cite{KKNU09}), 
certain local moves of defects, where as defects we allow not just links but also spatial graphs. 
In other words, we consider an equivalence relation among continuous maps $\R^3\setminus\Gamma \to \SO(3)/\D_2$, where $\Gamma$ runs over spatial graphs.
See Sections~\ref{sec:Mathematical model of global defects} and \ref{sec:Colored spatial graph diagrams} for a precise description.
We remark here that unfortunately, as far as we know, no truly knotted or linked defect lines have been observed experimentally in the actual biaxial nematic phase. 
Cholesteric liquid crystals do have observed links (see \cite{WiBo74}) but they are not defects but rather preimages of the Hopf fibration. 
Thus, everything addressed in this paper is still somewhat theoretical.

\begin{theorem}
\label{thm:top_classif_Q}
Up to the equivalence relation, there are exactly six biaxial nematic liquid crystal textures in $\R^3$ whose defects are \textup{(}possibly empty\textup{)} spatial graphs. They correspond to the six subgroups of $Q$ \textup{(}see Figure~\ref{fig:classification_correspondence}\textup{)}.
\end{theorem}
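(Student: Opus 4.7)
The natural invariant is the image of $\pi_1$ under the texture. Given any $f\colon\R^3\setminus\Gamma\to S^3/Q$, the induced homomorphism $f_*\colon\pi_1(\R^3\setminus\Gamma)\to\pi_1(S^3/Q)\cong Q$ produces a subgroup $H(f):=\Im f_*\subseteq Q$. A priori the image depends on the choice of basepoint up to conjugation in $Q$, but because every subgroup of $Q$ happens to be normal, $H(f)$ is genuinely well-defined; a direct enumeration yields the six candidates $\{1\}$, $\{\pm 1\}$, $\ang{i}$, $\ang{j}$, $\ang{k}$, and $Q$ itself. The first step is to verify that $H(\cdot)$ descends to the set of equivalence classes. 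Homotopy of $f$ does not change $f_*$, so only the local moves on $\Gamma$ need to be inspected; for each move defined in Sections~\ref{sec:Mathematical model of global defects} and~\ref{sec:Colored spatial graph diagrams} — contraction of a monodromy-$1$ edge, smoothing at a trivalent vertex, allowed crossing changes, and so on — one relates $\pi_1$ before and after and observes that the image in $Q$ is preserved.

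Realizing all six subgroups is straightforward. The empty defect together with a constant map yields $H=\{1\}$. For $H=\{\pm 1\}$, take $\Gamma$ to be an unknotted circle whose meridional monodromy is $-1$. For $H=\ang{i}$, $\ang{j}$, or $\ang{k}$, use an unknotted circle with meridional monodromy $i$, $j$, or $k$, respectively. For $H=Q$, take a split union of two unknots with meridional monodromies $i$ and $j$, which together generate $Q$. Existence of such continuous maps on the complement follows from obstruction theory together with $\pi_2(S^3/Q)=0$ (the universal cover is $S^3$), so that any consistent assignment of meridional labels extends uniquely up to homotopy.

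The substantive content is the converse: two textures with the same invariant $H$ are equivalent. The strategy is to reduce an arbitrary $(\Gamma,f)$ with $H(f)=H$ to the standard representative listed above. In a first stage one simplifies the spatial graph by using the allowed moves to erase edges labeled by $1$, amalgamate parallel edges with compatible labels, and smooth trivalent vertices carrying a cancelling triple. In a second stage one standardizes the labels on the surviving arcs, using conjugation via arc slides across vertices, so that they form a minimal generating set of $H$. In a third stage one unknots and unlinks the remaining loops, which the moves permit once the graph has been sufficiently reduced. Because $\R^3\setminus\Gamma$ has the homotopy type of a $2$-dimensional CW complex and $\pi_2(S^3/Q)=0$, the resulting map is determined up to homotopy by $f_*$, completing the identification with the standard model.

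The main obstacle is the diagrammatic reduction step: while invariance of $H$ under each individual move is a short check, closing the loop requires a normal-form theorem for $Q$-labeled spatial graph diagrams modulo the generating moves — an analogue of Reidemeister's theorem adapted to $Q$-colorings. The case $H=Q$ is expected to be the most delicate, since the labels on the two generating loops can interact through linking and through conjugation at shared vertices; showing that such interactions leave no residual invariant beyond the subgroup $H$ itself is where the bulk of the diagrammatic work must be carried out.
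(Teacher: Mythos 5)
Your setup is right and matches the paper's: the invariant is $H(f)=\Im f_*\subseteq Q$, it is well defined because every subgroup of $Q$ is normal, it is preserved by the moves, and all six subgroups are realized (your split union of two unknots colored $i$ and $j$ works just as well as the paper's theta-graph for $H=Q$). But the substantive direction --- that equal images imply equivalence --- is only sketched, and you acknowledge as much; the two mechanisms that actually close the argument in the paper are missing from your outline. First, your plan defers ``unknotting and unlinking'' to a final stage and asserts the moves permit it ``once the graph has been sufficiently reduced,'' with no justification. The paper does the opposite: it begins by killing \emph{every} crossing outright, using the fact that a crossing can be converted into a $4$-valent vertex by inserting an edge of color $1$ (edge-addition), contracting it, and applying a Reidemeister move --- the ``rung vortex'' trick of Example~\ref{ex:crossing_change} and Figure~\ref{fig:unknotting_tunnel}. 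This is the step that makes knotting and linking invisible to the classification, and without it (or a substitute) there is no reason the third stage of your plan can be carried out; a genuine normal-form theorem would indeed be required, as you note.

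Second, your ``standardize the labels by conjugation via arc slides'' is not strong enough to identify two diagrams whose colors generate the same subgroup $H$. After contracting to a rose one has a generating tuple $(b_1,\dots,b_n)$ of $H$, and one must convert it to a chosen tuple $(c_1,\dots,c_r)$; conjugation and permutation alone do not relate arbitrary generating tuples. The paper uses transvections (Lemma~\ref{lemma:Transvection}, replacing a color $c_1$ by $c_1c_2$) together with a stabilization step: add a loop of color $1$, grow it into $c_1$ by Nielsen moves, repeat to reach $(b_1,\dots,b_n,c_1,\dots,c_r)$, then run the process backwards to delete the $b_i$. This stabilized Nielsen equivalence is what handles, e.g., your worry about the case $H=Q$ with interacting generators: once the diagram is a rose, the only residual data is the generating tuple up to these moves, and stabilization shows any two tuples generating $H$ are related. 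So the proposal identifies the correct invariant and statement but leaves the injectivity proof as a stated obstacle rather than resolving it.
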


\begin{figure}[htbp]
\centering\includegraphics[width=14cm]{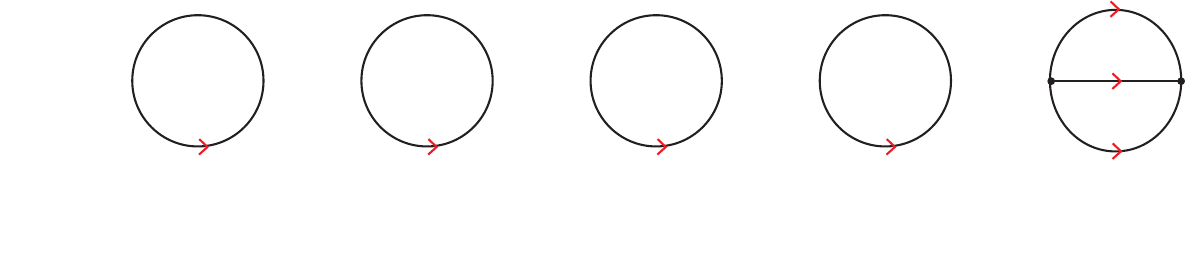}
\begin{picture}(400,0)(0,0)
\put(59,40){\color{red} $-1$}
\put(143,40){\color{red} $i$}
\put(220,40){\color{red} $j$}
\put(297,40){\color{red} $k$}
\put(374,104){\color{red} $i$}
\put(374,81){\color{red} $j$}
\put(374,38){\color{red} $k$}

\put(0,70){$\emptyset$}

\put(0,22){$\updownarrow$}
\put(63,22){$\updownarrow$}
\put(142,22){$\updownarrow$}
\put(218,22){$\updownarrow$}
\put(297,22){$\updownarrow$}
\put(374,22){$\updownarrow$}

\put(-05,0){$\{1\}$}
\put(53,0){$\{\pm 1\}$}
\put(126,0) {$\{\pm 1, \pm i\}$}
\put(202,0){$\{\pm 1, \pm j\}$}
\put(281,0){$\{\pm 1,\pm k\}$}
\put(373,0){$Q$}
\end{picture}
\caption{Representatives of each
class 
of 
global defects in biaxial nematic liquid crystals in $\R^3$ or $S^3$, up to equivalence. Labels (with directions) indicate local structure
(see Section~\ref{sec:Colored spatial graph diagrams}). 
}
\label{fig:classification_correspondence}
\end{figure}

In parts of this work we also consider textures that are uniform far from the origin, or equivalently, those in the $3$-dimensional sphere $S^3$ (see Remark~\ref{rem:S3}).
Our framework can be naturally generalized from the quaternion group $Q$ to an arbitrary group $G$. 
As an additional example, the case of the tetrahedral group will be discussed in Appendix~\ref{sec:Tetrahedral}.
Then, Theorem~\ref{thm:top_classif_Q} becomes a special case of Theorem~\ref{thm:top_classif}.
To prove Theorem~\ref{thm:top_classif}, we introduce purely combinatorial objects called \emph{$G$-colored spatial graph diagrams} and local moves among them (see Section~\ref{sec:Colored spatial graph diagrams} for the definition).

\subsection*{Acknowledgments}
The authors 
thank Ivan I. Smalyukh, Muneto Nitta, Michikazu Kobayashi, and David Palmer 
for many helpful comments.
They also thank the anonymous referees for their valuable  suggestions.
This study was supported in part by JSPS KAKENHI Grant Numbers JP20K14317, JP23K03108, JP23K12974, JP20K03588, JP21H00978, JP23H05437, and JP24K06744.

\section{Preliminaries}
\label{sec:preliminaries}
In this section, we first briefly review the necessary basics of groups and topology. 
See for example Hatcher~\cite{Hat02} and Nakahara~\cite{Nak03} for further details. 

\subsection{Quaternion group}
\label{subsec:quaternion}

Let $G$ be a group. 
We say that $G$ has a \emph{presentation} 
$\langle\, S \mid R \,\rangle$, where $S$ and $R$ are called 
sets of \emph{generators} and \emph{relators}, respectively, if each element of $G$ can be written as a product of powers of elements of $S$, and $R$ is a set of products of generators, each equal to the unity, whose products are sufficient to express all relations among the generators. 
More formally, $G$ is isomorphic to 
the quotient group of the free group $F(S)$ on 
$S$ by the normal closure of the subset $R$ of $F(S)$. 
A group $G$ is said to be \emph{finitely generated} 
if $G$ has a presentation $\langle\, S \mid R \,\rangle$ with 
a finite $S$. 

The \emph{quaternion group} 
$Q=\{\pm 1, \pm i, \pm j, \pm k\}$ is the non-abelian group generated by four elements $-1, i, j, k$ such that 
$-1$ commutes with other elements and 
they satisfy $(-1)^2 =1$ and $i^2 = j^2 = k^2 = ijk = -1$. 
Thus, a (non-minimal) presentation of $Q$ is 
\[ 
\langle\, \varepsilon, i,j,k \mid 
\varepsilon i \varepsilon^{-1}i^{-1},\ 
\varepsilon j \varepsilon^{-1}j^{-1},\ 
\varepsilon k \varepsilon^{-1}k^{-1},\
\varepsilon^2,\ 
\varepsilon i^2,\ \varepsilon j^2,\ \varepsilon k^2,\ 
\varepsilon ijk
\,\rangle, 
\]
where $\varepsilon$ corresponds to $-1$. 

Two elements $x$ and $y$ of a group $G$ are said to be \emph{conjugate} if there exists an element $a \in G$ 
with $y = a x a^{-1}$. 
Similarly, two subgroups $H$ and $H'$ of $G$ 
are \emph{conjugate} if there exists an element $a \in G$ 
with $H' = a H a^{-1}$. 
These are equivalence relations on $G$ itself and 
the set of subgroups of $G$, respectively, 
and in both cases the equivalence classes are called \emph{conjugacy classes}. 
Note that a normal subgroup is nothing but a subgroup that is conjugate only to itself. 
In abelian groups, 
distinct elements can never be conjugate, and 
every subgroup is normal. 
The quaternion group $Q$ has five conjugacy classes of elements: 
\[
[1]=\{1\},\ [-1] = \{-1\},\  
[i] = \{ i, -i \},\ 
[j] = \{ j, -j \},\text{ and }
[k] = \{ k, -k \}.
\]
In $Q$ there exist six subgroups in total, namely
\[ \langle 1 \rangle = \{1\},\ 
\langle -1 \rangle = \{ 1, -1 \},\ 
\langle i \rangle = \{ \pm 1, \pm i \},\ 
\langle j \rangle = \{ \pm 1, \pm j \},\ 
\langle k \rangle = \{ \pm 1, \pm k \},\ Q,
\]
and neither is conjugate to any other 
just like in the case of an abelian group.

\subsection{Fundamental group and Wirtinger presentation}
\label{subsec:Wirtinger}
Let $X$ be a topological space with a basepoint $x_0$ and let $p_0$ be a basepoint of the circle $S^1$.
We write $\pi_1(X,x_0)$ for the set of continuous maps $(S^1,p_0)\to (X,x_0)$ up to basepoint-preserving homotopy.
One can define the multiplication of two loops by their concatenation, which is well-defined up to homotopy.
This multiplication turns $\pi_1(X,x_0)$ into a group, which is called the \emph{fundamental group} of $X$.
Throughout this paper, we simply write $\pi_1(X)$ since $\pi_1(X,x_0)\cong \pi_1(X,x'_0)$ if $X$ is path-connected.
For instance, $\pi_1(S^1)\cong \Z$, $\pi_1(S^3)\cong \{1\}$, and $\pi_1(\SO(3))\cong \Z_2$.

\begin{figure}[htbp]
\centering\includegraphics[width=13cm]{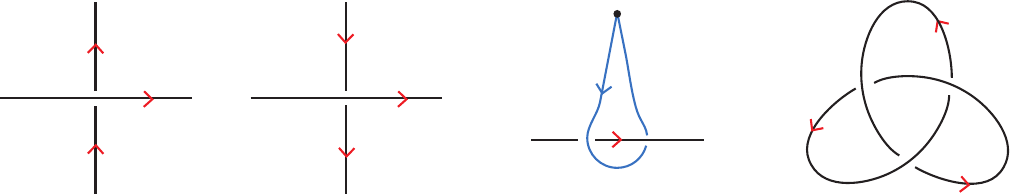}
\begin{picture}(400,0)(0,0)

\put(65,36){\color{red} $x_i$}
\put(35,24){\color{red} $x_k$}
\put(35,64){\color{red} $x_j$}

\put(158,36){\color{red} $x_i$}
\put(127,24){\color{red} $x_k$}
\put(127,64){\color{red} $x_j$}

\put(245,76){basepoint}

\put(365,72){\color{red} $x_1$}
\put(298,36){\color{red} $x_2$}
\put(368,6){\color{red} $x_3$}
\end{picture}
\caption{A positive crossing, negative crossing, the loop that corresponds to an arc, and a diagram of the trefoil knot.}
\label{fig:Wirtinger}
\end{figure}

For a link $L$ in $S^3$ (or $\R^3$), let $E(L)$ denote the complement of an open tubular neighborhood of $L$.
The fundamental group $\pi_1(E(L))$ can be explicitly described from a diagram of $L$.
Let $D$ be a diagram (i.e., regular projection) of the link $L$ with $m$ crossings and $n$ arcs, each of which is either a loop without self-crossings, or connects two undercrossing sites.
By assigning letters $x_1,\dots,x_n$ to the arcs, we obtain a presentation for $\pi_1(E(L))$ with $n$ generators $x_1,\dots,x_n$ and $m$ relators. Namely, if we place the basepoint ``above'' the projection plane then the generator $x_i$ is thought of as a loop that descends from there, links the arc $x_i$ once as shown in Figure \ref{fig:Wirtinger}, and returns to the basepoint without interacting with any other part of $L$.
Furthermore, each relator comes from a crossing and is of the form $x_ix_jx_i^{-1}x_k^{-1}$ (resp.\ $x_ix_j^{-1}x_i^{-1}x_k$) if the corresponding crossing is positive (resp.\ negative), as drawn in Figure~\ref{fig:Wirtinger}.
This presentation is called the \emph{Wirtinger presentation} of $\pi_1(E(L))$.
For instance, we obtain
\[
\pi_1(E(K))\cong \ang{x_1,x_2,x_3\mid x_1x_3^{-1}x_1^{-1}x_2,\ x_2x_1^{-1}x_2^{-1}x_3,\ x_3x_2^{-1}x_3^{-1}x_1}
\cong \ang{x_1,x_2\mid x_1x_2x_1^{-1}x_2^{-1}x_1^{-1}x_2}
\]
for the trefoil knot $K$ illustrated on the right in Figure~\ref{fig:Wirtinger}.

For pairs of topological spaces $(X,A)$ and $(Y,B)$, let $\Map((X,A),(Y,B))$ denote the set of continuous maps $f\colon X\to Y$ satisfying $f(A)\subset B$.
In particular, the set $\Map((X,\{x_0\}),(Y,\{y_0\}))$ of basepoint-preserving continuous maps is simply denoted by $\Map_0(X,Y)$.
There is a natural correspondence $\Map_0(X,Y)\to \Hom(\pi_1(X),\pi_1(Y))$, defined by sending a map $f\colon X \to Y$ to the induced homomorphism between the fundamental groups. 
This gives rise to another natural mapping
\[\Map_0(X,Y)/{\simeq}\to \Hom(\pi_1(X),\pi_1(Y)),\]
where $\simeq$ is the equivalence relation of basepoint-preserving homotopy.
Here $\Hom(G,H)$ denotes the set of homomorphisms from a group $G$ to $H$.

For an integer $n\geq 2$, the $n$th homotopy group $\pi_n(X,x_0)$ is obtained from $\Map_0(S^n,X)$ by considering basepoint-preserving homotopies and a concatenation operation similar to that of $\pi_1$. In contrast to $\pi_1$, these groups are always abelian.

\section{Mathematical model of global defects}
\label{sec:Mathematical model of global defects}

Recall that the order parameter space for biaxial nematic liquid crystals is $\SO(3)/\D_2$ and their global defects can be regarded as spatial graphs $\Gamma$ in $\R^3$.
Mathematically, they correspond to continuous maps $\R^3\setminus\Gamma \to \SO(3)/\D_2$.
Note here that $\pi_1(\SO(3)/\D_2)\cong Q$ and $\pi_2(\SO(3)/\D_2)\cong \{0\}$ since $\pi_2(\SO(3))=\{0\}$ and the $n$th homotopy groups of covering spaces agree for each $n\geq2$.

We now generalize this situation to an arbitrary group $G$.
Let $X_G$ be a CW-complex satisfying $\pi_1(X_G)\cong G$ and $\pi_2(X_G)=\{0\}$.
(Such spaces exist for all $G$.)
The next lemma, used in Section~\ref{sec:Classification}, allows us to describe continuous maps in terms of group homomorphisms.

\begin{lemma}
\label{lem:2-skeleton}
The natural correspondence $\Map_0(\R^3\setminus\Gamma, X_G)/{\simeq}\to \Hom(\pi_1(\R^3\setminus\Gamma), G)$ is a bijection, where $\simeq$ is the equivalence relation of basepoint-preserving homotopy.
\end{lemma}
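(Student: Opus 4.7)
The plan is to apply elementary obstruction theory, leveraging the low dimensionality of the source and the hypothesis $\pi_2(X_G) = 0$. The first step is to reduce to a $2$-dimensional source: the exterior $E(\Gamma)$ of $\Gamma$ (the complement of an open regular neighborhood) is a compact $3$-manifold with non-empty boundary, so by a standard handle-theoretic argument it collapses onto a $2$-dimensional CW-complex $Y$. Since $\R^3 \setminus \Gamma$ deformation retracts onto $E(\Gamma)$ and $E(\Gamma) \simeq Y$, it suffices to prove the corresponding bijection for $Y$ in place of $\R^3 \setminus \Gamma$.

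For surjectivity, given a homomorphism $\varphi \colon \pi_1(Y) \to G$, I would build a representative $f \colon Y \to X_G$ cell by cell. After sending the (single) $0$-cell to the basepoint of $X_G$ and each $1$-cell to a loop representing $\varphi$ of the corresponding generator of the free group $\pi_1(Y^{(1)})$, the only obstruction lies in extending over the $2$-cells. But each attaching loop represents a relator $w$ of $\pi_1(Y)$, so $\varphi(w) = 1$ and the induced loop in $X_G$ is null-homotopic, allowing the extension. By construction $f_* = \varphi$.

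For injectivity, given $f, g \colon Y \to X_G$ with $f_* = g_*$, I would build a basepoint-preserving homotopy $H \colon Y \times I \to X_G$ cell by cell, starting from the prescribed values $f$ and $g$ on $Y \times \{0, 1\}$ and the constant basepoint map on $\{x_0\} \times I$. Extending over cylinders $e^1 \times I$ on $1$-cells requires that certain boundary loops in $X_G$ be null-homotopic, which follows directly from $f_* = g_*$. Extending over cylinders $e^2 \times I$ on $2$-cells amounts to filling a map $S^2 \to X_G$, whose obstruction lies in $\pi_2(X_G) = 0$. This last step is precisely where the hypothesis on $\pi_2(X_G)$ is used; no higher homotopy groups of $X_G$ enter because $Y$ is only $2$-dimensional.

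The main technical care is in tracking basepoints throughout the cellular construction: one fixes paths from the basepoint into each cell so that the loops arising in $X_G$ become honest based loops whose classes are controlled by $f_*$ or $g_*$, rather than merely freely homotopic ones. This is routine but is the place where one has to be slightly careful; everything else is essentially dictated by dimension counting and the two hypotheses on $X_G$.
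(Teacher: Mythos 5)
Your proof is correct and follows essentially the same route as the paper's: reduce to a $2$-dimensional CW model of the complement and run elementary obstruction theory cell by cell, invoking $\pi_2(X_G)=0$ exactly where you do. The paper's version is terser---it packages surjectivity and injectivity as the construction of an inverse map, with the extension over $2$-cells giving existence and its uniqueness up to homotopy supplied by $\pi_2(X_G)=0$---but the content is the same.
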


\begin{proof}
First note that $\R^3\setminus\Gamma$ is homotopy equivalent to a CW-complex $Y$ of dimension at most $2$ and let $Y^{(d)}$ denote the $d$-skeleton of $Y$.
It suffices to give the inverse map of $\phi\colon \Map_0(Y, X_G)/{\simeq}\to \Hom(\pi_1(Y), G)$.
Let $f\in \Hom(\pi_1(Y), G)$.
Then $f$ is induced by some continuous map $Y^{(1)}\to X_G$.
Since $f$ is a homomorphism between fundamental groups, the map extends to $Y^{(2)}=Y \to X_G$, which is unique up to homotopy because $\pi_2(X_G)=\{0\}$.
This gives the inverse map of $\phi$.
\end{proof}

\begin{definition}
\label{def:equivalence}
Fix a basepoint $p_0$ in $\R^3$.
We define an equivalence relation $\sim$ on the disjoint union $\coprod_\Gamma \Map_0(\R^3\setminus \Gamma, X_G)$, where $\Gamma$ runs over all spatial graphs (i.e., $1$-dimensional finite CW-complexes) in $\R^3\setminus\{p_0\}$.
Two maps $f\in \Map_0(\R^3\setminus \Gamma, X_G)$ and $g\in \Map_0(\R^3\setminus \Gamma', X_G)$ are \emph{equivalent}, $f\sim g$, if there exists a basepoint-preserving ambient isotopy $h_t$ of $\R^3$ sending $\widehat{\Gamma}=\Gamma\cup\{v_i\}\cup\{e_j\}$ to 
$\widehat{\Gamma}'=\Gamma'\cup\{v'_k\}\cup\{e'_l\}$ in such a way that $f|_{\R^3\setminus \widehat{\Gamma}}$ is homotopic to $g|_{\R^3\setminus \widehat{\Gamma}'}\circ h_1$, after adding suitable vertices $v_i$, $v'_k$ and edges $e_j$, $e'_l$ to each graph.
\end{definition}

This equivalence relation is one possible mathematical model for local deformation of defects.
Here we remark that, in our model, 
the following three types of moves, and their inverses, cannot be realized within the same equivalence class in general:
\begin{enumerate}[label=(\arabic*)]
\item\label{ichi}
shrinking an unknotted defect loop 
to a point
\item \label{ni}
splitting a segment of defect into two parallel copies, with the appropriate labels, joining at two new vertices of degree $3$
\item\label{san}
splitting a defect loop into two parallel components.
\end{enumerate}
We note that allowing \ref{ni} would imply that \ref{san} is possible as well. 
But in fact, there are some physical considerations against doing so.
As to \ref{ichi}, it is mentioned in \cite{AZM22} that
in 
the Bose--Einstein condensates discussed in \cite{RuAn03}, 
an energy barrier prevents an unknotted defect loop from contracting to a point defect.
Furthermore,  
the splitting of defects 
is energetically unfavorable in some biaxial nematic liquid crystal textures, 
and it is not observed in simulations (see \cite{PrPe02}).

Mathematically, it would not be hard to incorporate the above 
moves into the model. 
However, if we included any one of the three, that would be enough to make any defect configuration equivalent to any other, that is, the classification would become trivial.

Returning to our model above, we can define a map
\[
\Phi\colon \left(\coprod_\Gamma \Map_0(\R^3\setminus \Gamma, X_G)\right)/{\sim} \to \SS_G^\fg
\]
by $\Phi([f])=\Im(f_\ast\colon \pi_1(\R^3\setminus\Gamma)\to G)$, where $\SS_G^\fg$ denotes the set of finitely generated subgroups of $G$.
Indeed if $f\sim g$ via $h_t$ then we have a commutative diagram
\[
\xymatrix{
\pi_1(\R^3\setminus \widehat{\Gamma}) \ar@{->>}[r]^-{\incl_\ast} \ar[d]_-{h_1}^-{\cong} & \pi_1(\R^3\setminus \Gamma) \ar[r]^-{f_\ast} & \pi_1(X_G) \ar@{=}[d] \\
\pi_1(\R^3\setminus \widehat{\Gamma}') \ar@{->>}[r]^-{\incl_\ast} & \pi_1(\R^3\setminus \Gamma') \ar[r]^-{g_\ast} & \pi_1(X_G),
}
\]
and thus $\Phi$ is well-defined.
The main result of this paper is the following theorem, which gives the complete classification of continuous maps up to the above equivalence relation.
In particular, the case $G=Q$ provides the complete classification of biaxial nematic liquid crystals in our model.
As to the proof, we will see that our claim is a
direct consequence of Theorem~\ref{thm:comb_classif} below.

\begin{theorem}
\label{thm:top_classif}
The map $\Phi$ is a bijection.
\end{theorem}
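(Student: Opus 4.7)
The plan is to split the bijection into surjectivity and injectivity, and to deploy the forthcoming combinatorial classification (Theorem~\ref{thm:comb_classif}) as the main engine.

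For surjectivity, given a finitely generated subgroup $H \le G$ with a finite generating set $h_1, \ldots, h_n$, I would take $\Gamma$ to be an $n$-component trivial link in $\R^3 \setminus \{p_0\}$. The fundamental group $\pi_1(\R^3 \setminus \Gamma)$ is then free on the meridians $m_1, \ldots, m_n$, so the assignment $m_i \mapsto h_i$ defines a homomorphism to $G$ whose image is exactly $H$. By Lemma~\ref{lem:2-skeleton}, this homomorphism is realized by a basepoint-preserving map $f \in \Map_0(\R^3 \setminus \Gamma, X_G)$, unique up to basepoint-preserving homotopy, and $\Phi([f]) = \Im(f_\ast) = H$ by construction.

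For injectivity, the plan is to reinterpret each pair $(f, \Gamma)$ as a $G$-colored spatial graph diagram. Choose a generic planar projection of $\Gamma$, and color each arc of the resulting diagram by the image in $G$ of the corresponding Wirtinger meridian under $f_\ast$. The Wirtinger relations translate exactly into the crossing-compatibility conditions required by the definition of a $G$-colored diagram in Section~\ref{sec:Colored spatial graph diagrams}. The topological equivalence $\sim$ will then correspond to the local moves on colored diagrams featured in Theorem~\ref{thm:comb_classif}, and that combinatorial classification should state that two colored diagrams are equivalent precisely when the subgroups of $G$ generated by their arc colors coincide. Since these subgroups are exactly $\Phi([f])$ and $\Phi([g])$, any two classes with the same image under $\Phi$ must already coincide.

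The main obstacle will be the faithfulness of this topology-to-combinatorics reduction: verifying that the equivalence relation $\sim$ (ambient isotopy of $\R^3$ fixing $p_0$, together with basepoint-preserving homotopy on complements, after adding auxiliary vertices and edges so that the two underlying graphs become ambient isotopic) induces precisely the equivalence on colored diagrams used in Theorem~\ref{thm:comb_classif}. This entails showing Reidemeister-type invariance of the coloring, translating edge subdivision and vertex insertion into the permitted diagrammatic moves, and using Lemma~\ref{lem:2-skeleton} together with the Wirtinger presentation to ensure that the homotopy class of $f$ is faithfully recorded by the arc colors. Once this dictionary is in place, bijectivity of $\Phi$ follows at once from Theorem~\ref{thm:comb_classif}.
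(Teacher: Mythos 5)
Your proposal is correct and follows essentially the same route as the paper: both reduce the statement to Theorem~\ref{thm:comb_classif} via the Wirtinger-coloring correspondence between maps and $G$-colored diagrams, with Lemma~\ref{lem:2-skeleton} guaranteeing that the arc colors faithfully record the basepoint-preserving homotopy class. One remark: for your injectivity chain you only need the easy direction of the dictionary (move-equivalent diagrams yield $\sim$-equivalent maps) together with the fact that every class $[f]$ is hit by some colored diagram; the converse direction, which you single out as the main obstacle, is not required as an input --- the paper obtains it only as a byproduct of the bijectivity statements.
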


\begin{remark}
When we describe biaxial nematic liquid crystals in an orientable $3$-manifold $M$ in terms of continuous maps, we need to fix a framing on $M$, that is, a trivialization of the tangent bundle of $M$.
In the case $M$ is $\R^3$ or one of its subspaces, one can use the standard framing of $\R^3$.
\end{remark}

\section{Colored spatial graph diagrams}
\label{sec:Colored spatial graph diagrams}
In this section we introduce $G$-colored diagrams, which enable us to manipulate continuous maps in a combinatorial way.

A \emph{spatial graph} is a finite $1$-dimensional CW-complex in $\R^3$ or $S^3$, which is allowed to be disconnected and possibly has multiple edges, loops, and isolated points.
An oriented spatial graph is a spatial graph whose edges are oriented.
For an (oriented) spatial graph $\Gamma$, we always have a regular projection $D\subset \R^2$ of $\Gamma$, that is, a projection that restricts to a homeomorphism between $\Gamma$ and $D$ except for finitely many transversal double points in $D$.
At each double point, 
the relative heights of its preimages are indicated by drawing the ``undercrossing'' strand with a short gap.
In the resulting diagram $D$, let us write $\A_D$ for the set of arcs. In this case arcs are either free loops (lying `above' other parts of $D$) or curved, embedded segments whose endpoints are either undercrossings or vertices of $D$. 

Let $G$ be a group, for instance, the quaternion group $Q$.

\begin{definition}
\label{def:diagram}
A \emph{$G$-colored oriented spatial graph diagram} (or simply \emph{$G$-colored diagram}) $D$ is a regular projection of an oriented spatial graph such that each arc of $D$ 
has an associated 
element of $G$, and these satisfy relations arising from crossings and vertices.
Namely, at each crossing the corresponding relator is the same as in the Wirtinger presentation, and at a vertex of degree $k$, as drawn in Figure~\ref{fig:vertex_relation}, the corresponding relator is $c_1^{\varepsilon_1}\cdots c_k^{\varepsilon_k}$, where $\varepsilon_i=1$ (resp.\  $\varepsilon_i=-1$) if the $i$th half-edge is a tail (resp.\ head).
Finally, the color of $\alpha\in \A_D$ will be denoted by $\col(\alpha)$.
\end{definition}

\begin{figure}[htbp]
\centering\includegraphics[width=9cm]{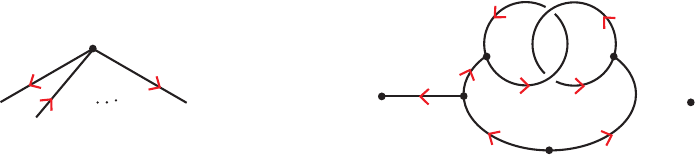}
\begin{picture}(400,0)(0,0)
\put(76,44){\color{red} $c_1$}
\put(90,22){\color{red} $c_2$}
\put(130,42){\color{red} $c_k$}

\put(290,10){\color{red} $c_1$}
\put(290,70){\color{red} $c_2$}
\put(282,28){\color{red} $c_3$}
\put(248,70){\color{red} $c_4$}
\put(258,28){\color{red} $c_5$}
\put(233,45){\color{red} $c_6$}
\put(224,24){\color{red} $c_7$}
\put(248,10){\color{red} $c_8$}

\end{picture}
\caption{Half-edges around a vertex and an example of a $G$-colored diagram.}
\label{fig:vertex_relation}
\end{figure}

Note that we sometimes omit vertices of degree two in $G$-colored diagrams of knots and links, as in  Figure~\ref{fig:classification_correspondence}.
We stress that each color $\col(\alpha)$ is an element and not a conjugacy class of $G$.
Among other things, this is necessary for the relators above to make sense.
A $G$-colored diagram $D$ is said to be \emph{equivalent} to $D'$ if $D'$ is obtained from $D$ by a finite sequence of the following local moves and ambient isotopy of $\R^2$.

\begin{enumerate}[label=(\arabic*)]
    \item 
    Consider a path in a $G$-colored diagram corresponding to 
    an edge of the spatial graph. 
    Then, a move that reverses the orientations and replaces the colors 
    with their inverses, for all arcs along the path simultaneously, is called an \emph{orientation-reversal}. 
    See Figure \ref{fig:local_move_1}. 
    This move changes some orientations and colors but 
    does not change the diagram. 
    
\begin{figure}[htbp]
\centering\includegraphics[width=12cm]{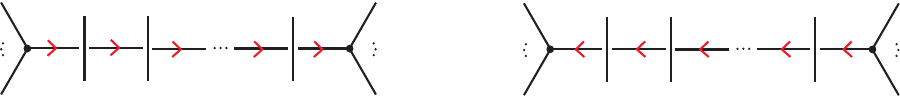}
\begin{picture}(400,0)(0,0)

\put(45,38){\color{red} $c_1$}
\put(69,38){\color{red} $c_2$}
\put(147,38){\color{red} $c_n$}

\put(242,38){\color{red} $c_1^{-1}$}
\put(266,38){\color{red} $c_2^{-1}$}
\put(344,38){\color{red} $c_n^{-1}$}

\put(190,26){$\longleftrightarrow$}

\end{picture}
\caption{Orientation-reversal.}
\label{fig:local_move_1}
\end{figure}
    \item 
    A \emph{Reidemeister move} is a one of the five local moves 
    $\RR_1$--$\RR_5$ shown in Figure \ref{fig:local_move_2}. 
    These moves change diagrams, but do not change the isotopy class of the corresponding spatial graph. 
    \begin{figure}[htbp]
\centering\includegraphics[width=14cm]{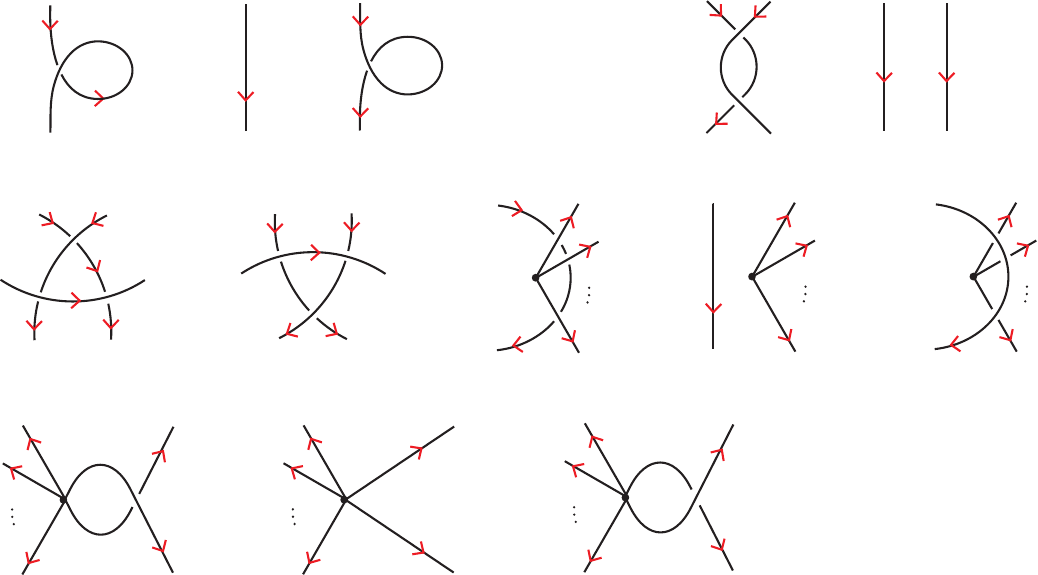}
\begin{picture}(400,0)(0,0)

\put(6,222){\color{red} $c_1$}
\put(34,184){\color{red} $c_1$}
\put(80,195){\color{red} $c_1$}
\put(124,190){\color{red} $c_1$}
\put(124,223){\color{red} $c_1$}
\put(63,205){$\overset{\RR_1}{\longleftrightarrow}$}
\put(108,205){$\overset{\RR_1}{\longleftrightarrow}$}

\put(262,225){\color{red} $c_1$}
\put(300,225){\color{red} $c_2$}
\put(262,185){\color{red} $c_1$}
\put(305,205){$\overset{\RR_2}{\longleftrightarrow}$}
\put(328,200){\color{red} $c_1$}
\put(370,200){\color{red} $c_2$}

\put(13,155){\color{red} $c_1$}
\put(35,155){\color{red} $c_2$}
\put(25,107){\color{red} $c_3$}
\put(65,125){$\overset{\RR_3}{\longleftrightarrow}$}
\put(92,145){\color{red} $c_1$}
\put(141,145){\color{red} $c_2$}
\put(116,142){\color{red} $c_3$}

\put(198,158){\color{red} $b$}
\put(198,88){\color{red} $b$}
\put(212,155){\color{red} $c_1$}
\put(225,144){\color{red} $c_2$}
\put(225,102){\color{red} $c_n$}
\put(240,125){$\overset{\RR_4}{\longleftrightarrow}$}
\put(262,111){\color{red} $b$}
\put(295,156){\color{red} $c_1$}
\put(307,145){\color{red} $c_2$}
\put(307,103){\color{red} $c_n$}
\put(327,125){$\overset{\RR_4}{\longleftrightarrow}$}
\put(365,88){\color{red} $b$}
\put(379,155){\color{red} $c_1$}
\put(392,144){\color{red} $c_2$}
\put(392,102){\color{red} $c_n$}

\put(0,63){\color{red} $c_1$}
\put(-5,45){\color{red} $c_2$}
\put(0,20){\color{red} $c_n$}
\put(68,55){\color{red} $b_2$}
\put(68,22){\color{red} $b_1$}
\put(78,40){$\overset{\RR_5}{\longleftrightarrow}$}
\put(107,63){\color{red} $c_1$}
\put(102,45){\color{red} $c_2$}
\put(107,20){\color{red} $c_n$}
\put(155,67){\color{red} $b_2$}
\put(155,12){\color{red} $b_1$}
\put(182,40){$\overset{\RR_5}{\longleftrightarrow}$}
\put(215,63){\color{red} $c_1$}
\put(210,45){\color{red} $c_2$}
\put(215,20){\color{red} $c_n$}
\put(283,55){\color{red} $b_2$}
\put(283,22){\color{red} $b_1$}
\end{picture}
\caption{Reidemeister moves: The colors of the short `local' arcs are omitted because they are uniquely determined by the colors of the other arcs, according to the 
appropriate Wirtinger relators.
}
\label{fig:local_move_2}
\end{figure}
    \item 
    A move that eliminates or 
    adds a vertex of degree $2$, as shown on the left side in Figure~\ref{fig:local_moves_3-4}, is called an  
    \emph{edge-combining} or \emph{edge-subdivision}. 
    This move changes the vertex and edge sets 
    of the corresponding spatial graph, but 
    does not change the underlying subset of $\R^3$.
    Even though this move is a special case of \ref{yon}, we include it separately for convenience.
\begin{figure}[htbp]
\centering\includegraphics[width=15cm]{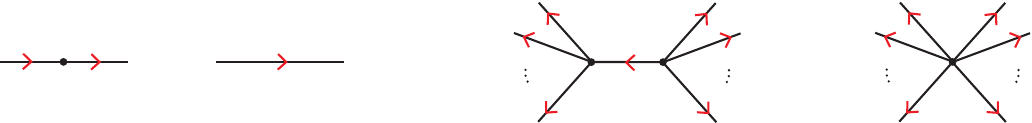}
\begin{picture}(400,0)(0,0)
\put(-5,43){\color{red} $c$}
\put(23,43){\color{red} $c$}
\put(98,43){\color{red} $c$}
\put(48,35){$\longleftrightarrow$}

\put(214,62){\color{red} $b_1$}
\put(194,39){\color{red} $b_2$}
\put(198,18){\color{red} $b_m$}

\put(231,27){\color{red} $b_1 \cdots b_m$}

\put(268,64){\color{red} $c_1$}
\put(288,39){\color{red} $c_2$}
\put(283,20){\color{red} $c_n$}

\put(309,35){$\longleftrightarrow$}

\put(366,62){\color{red} $b_1$}
\put(346,39){\color{red} $b_2$}
\put(348,18){\color{red} $b_m$}
\put(389,62){\color{red} $c_1$}
\put(408,39){\color{red} $c_2$}
\put(404,18){\color{red} $c_n$}

\end{picture}
\caption{Edge-combining and edge-subdivision (left);
edge-contraction and vertex-splitting (right).}
\label{fig:local_moves_3-4}
\end{figure}

    \item \label{yon}
    A move that contracts an edge of a $G$-colored diagram to 
    a point is called an \emph{edge-contraction}. 
    The inverse of this move, that is, 
    a move that divides the half-edges incident to a prefixed vertex into two sets, makes each set incident to a separate vertex, and finally joins those two vertices by an edge, where 
    the color of that new edge is determined naturally from those of the half-edges, is called a \emph{vertex-splitting}.
    See the right side of Figure~\ref{fig:local_moves_3-4}. 
    Note that in the figure, the color of the newly added edge 
    under the vertex-splitting is $b_1\cdots b_m$, which is equal to 
    $c_1^{-1}\cdots c_n^{-1}$. 
    This move changes the corresponding spatial graph, but 
    does not change the regular neighborhood of the spatial graph in space. 

%
%
%
%
    \item 
    A move that adds or removes an edge of the color $1$ 
    is called an \emph{edge-addition} or \emph{edge-deletion}. 
    See Figure \ref{fig:local_move_5}. 
    This move does change the corresponding spatial graph, and
    even changes the isotopy class of its regular neighborhood. 
\begin{figure}[htbp]
\centering\includegraphics[width=8cm]{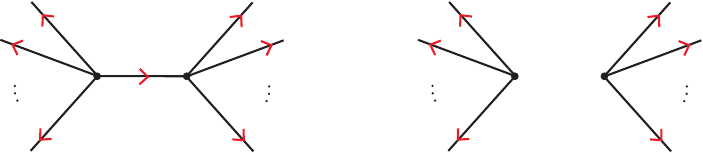}
\begin{picture}(400,0)(0,0)
\put(102,62){\color{red} $b_1$}
\put(82,37){\color{red} $b_2$}
\put(84,18){\color{red} $b_m$}
\put(157,62){\color{red} $c_1$}
\put(172,37){\color{red} $c_2$}
\put(169,18){\color{red} $c_n$}

\put(188,35){$\longleftrightarrow$}

\put(238,62){\color{red} $b_1$}
\put(217,37){\color{red} $b_2$}
\put(220,18){\color{red} $b_m$}
\put(130,24){\color{red} $1$}
\put(290,62){\color{red} $c_1$}
\put(305,37){\color{red} $c_2$}
\put(305,18){\color{red} $c_n$}
\end{picture}
\caption{Edge-addition and edge-deletion.}
\label{fig:local_move_5}
\end{figure}
    \item 
    A move that adds or removes an isolated vertex 
    is called a \emph{vertex-addition} or \emph{vertex-deletion}. 
    This also changes the spatial graph, as well as the isotopy class of its regular neighborhood. 
    \item 
    For an element $x \in G$, a move replacing 
    $\col(\alpha)$ with $x\col(\alpha)x^{-1}$ for every $\alpha\in \A_D$ is called a 
    \emph{simultaneous conjugation}. 
\end{enumerate}



\begin{example}
\label{ex:crossing_change}
Suppose a given colored diagram contains a part where two edges of the graph are locally hooked to each other as shown 
on the left of 
Figure~\ref{fig:crossing_change}. 
\begin{figure}[h!]
\centering\includegraphics[width=14.5cm]{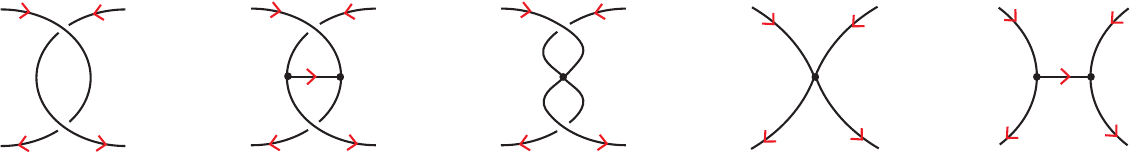}
\begin{picture}(400,0)(0,0)
\put(-10,69){\color{red}$c_1$}
\put(35,69){\color{red}$c_2$}
\put(-10,3){\color{red}$b_1$}
\put(35,3){\color{red}$b_2$}

\put(45,38){$\xrightarrow{(3),(5)}$}

\put(81,69){\color{red}$c_1$}
\put(126,69){\color{red}$c_2$}
\put(81,3){\color{red}$b_1$}
\put(126,3){\color{red}$b_2$}
\put(105,44){\color{red}$1$}

\put(145,38){$\xrightarrow{(4)}$}

\put(173,69){\color{red}$c_1$}
\put(218,69){\color{red}$c_2$}
\put(173,3){\color{red}$b_1$}
\put(218,3){\color{red}$b_2$}

\put(235,38){$\xrightarrow{(2)}$}

\put(264,69){\color{red}$c_1$}
\put(309,69){\color{red}$c_2$}
\put(264,3){\color{red}$b_1$}
\put(309,3){\color{red}$b_2$}

\put(325,38){$\xrightarrow{(4)}$}

\put(355,69){\color{red}$c_1$}
\put(400,69){\color{red}$c_2$}
\put(355,3){\color{red}$b_1$}
\put(400,3){\color{red}$b_2$}
\put(380,44){\color{red}$c$}

\end{picture}
\caption{Sequence of moves that resolves a part in  a colored diagram where two edges 
are locally hooked to each other. 
In this figure, $b_1 = c_1 c_2 c_1 c_2^{-1} c_1^{-1}$, $b_2 = c_1 c_2 c_1^{-1}$, 
and $c = c_1 c_2 c_1^{-1} c_2^{-1}$.}
\label{fig:crossing_change}
\end{figure}
Then, by successively applying the moves introduced above as indicated in 
Figure~\ref{fig:crossing_change}, 
it is possible to resolve the hook. 
Here, the edge with the color $c$ in the 
figure is sometimes called a \textit{rung vortex}. 
We can easily check that $c$ is equal to 
$c_1 c_2 c_1^{-1} c_2^{-1}$, the so-called commutator of $c_1$ and $c_2$. 
Thus, in the case when $c_1$ and $c_2$ are commutative, $c$ equals $1$. This implies that 
we can formally eliminate
that edge by the move $(5)$. 
In other words, we can resolve the two crossings 
shown on the left of
Figure~\ref{fig:crossing_change} without creating a rung vortex in that case.
\end{example}

\begin{remark}
Similar to the steps of Example \ref{ex:crossing_change}, which produce a crossing change between strands of commutative colors, it is also easy to realize a reconnection of strands of the same color.
Here some of the intermediate stages are graphs, yet this is clearly enough to conclude that if two links are equivalent in the sense used in \cite{AZM22} then they are also equivalent, up to the moves (1)--(6) above, as spatial graphs.
\end{remark}

\section{Classification}
\label{sec:Classification}
One can define a map $\Psi\colon \{\text{$G$-colored diagrams}\}/\text{(1)--(6)} \to \SS_G^\fg$ by $\Psi(D)=\ang{\{\col(\alpha)\mid \alpha\in \A_D\}}$, where $\col(\alpha)$ denotes the color of an arc $\alpha\in \A_D$ of $D$ and $\ang{S}$ is the subgroup of $G$ generated by the subset $S$.
Indeed, it is easy to verify that none of the moves changes this group.
The map $\Psi$ fits into the commutative diagram
\[
\xymatrix{
\{\text{$G$-colored diagrams}\}/\text{(1)--(6)} \ar[r]^-\Psi \ar@{->>}[d] & \SS_G^\fg, \\
\coprod_\Gamma \Map_0(\R^3\setminus \Gamma, X_G)/{\sim} \ar[ru]_-\Phi &
}
\]
where the vertical surjective map exists by Lemma~\ref{lem:2-skeleton}.
More precisely, for a $G$-colored diagram $D$, let $\Gamma$ be the corresponding spatial graph and note that 
(by the restrictions that Definition~\ref{def:diagram} places on the system of colors)
we have a homomorphism $\pi_1(\R^3\setminus \Gamma)\to G$ by sending a loop wrapping around an arc $\alpha$ to $\col(\alpha)$ (see Figure~\ref{fig:Wirtinger}).
By Lemma~\ref{lem:2-skeleton}, the homomorphism corresponds to a continuous map $\R^3\setminus \Gamma\to X_G$.
This gives a map $\{\text{$G$-colored diagrams}\}\to \coprod_\Gamma \Map_0(\R^3\setminus \Gamma, X_G)$ which descends to the vertical map above since the local moves (1)--(6) are realized by the equivalence relation $\sim$.

\begin{theorem}
\label{thm:comb_classif}
The map $\Psi$ is a bijection.
\end{theorem}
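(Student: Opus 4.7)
The plan is to show that $\Psi$ is both surjective and injective. Surjectivity is straightforward: given a finitely generated subgroup $H = \langle g_1, \ldots, g_n \rangle \leq G$ (with $n = 0$ allowed, in which case $H = \{1\}$), consider the $G$-colored diagram $D_H$ consisting of $n$ disjoint, unknotted, unlinked loops colored $g_1, \ldots, g_n$ (the empty diagram when $n = 0$); then $\Psi(D_H) = H$, so every finitely generated subgroup is attained.

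For injectivity, my approach is to reduce any $D$ with $\Psi(D) = H$ to the fixed standard form $D_H$ above. The key tool will be what I call the \emph{Loop Addition Lemma}: for any $G$-colored diagram $E$ and any $h \in \Psi(E)$, one has $E \sim E \sqcup U_h$ under moves (1)--(6), where $U_h$ is a disjoint, unknotted loop colored $h$; the converse (Loop Removal) is the same statement with the moves reversed. To establish the lemma, I would express $h$ as a word $c_{i_1}^{\epsilon_1} \cdots c_{i_k}^{\epsilon_k}$ in the arc colors of $E$ and construct $U_h$ inductively. Start by adding an isolated vertex $v_0$ in a disjoint region via move (6); at each stage $j$, subdivide an arc of $E$ of color $c_{i_j}$ using move (3), attach a color-$1$ edge to the growing new structure via move (5), and apply vertex-splitting (move (4)) to transfer the color $c_{i_j}^{\epsilon_j}$ onto a new arc emanating from $v_0$. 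After $k$ iterations, a final vertex-splitting closes the result into a loop of color $h$, which is then pulled into an entirely disjoint region by Reidemeister moves (move (2)).

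Granted the lemma, injectivity proceeds as follows. Apply Loop Addition to append disjoint unknots $U_{g_1}, \ldots, U_{g_n}$ to $D$, yielding $D \sqcup D_H$. The colors of the original $D$ all lie in $H$, and $H$ is already generated by the colors of $D_H$ alone, so every arc of the original $D$ is redundant. Using crossing-change resolutions in the spirit of Example~\ref{ex:crossing_change} together with vertex-manipulation moves (3)--(4) and Loop Removal, strip away everything of $D$ while leaving $D_H$ intact; the rung-vortex edges introduced by crossing changes have commutator colors, which lie in $H$ and are removable in the same way. The same procedure applied to $D'$ also reduces to $D_H$, and hence $D \sim D_H \sim D'$.

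The main obstacle will be making the Loop Addition Lemma genuinely rigorous: at every step one must verify that the Wirtinger-type relators at newly created vertices and crossings are satisfied, and that the final isotopy pulling $U_h$ into a disjoint region can be carried out without unintended interactions with $E$. A closely related challenge is justifying that crossing changes combined with vertex-splittings really do suffice to unravel arbitrary knotted or linked content of $D$ within the $H$-colored equivalence class, so that the stripping phase above terminates at exactly $D_H$.
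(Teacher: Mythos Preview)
Your overall strategy---choose a normal form and show every $D$ with $\Psi(D)=H$ reduces to it---is the same as the paper's, but the execution diverges in two places. First, the paper takes as its normal form the standard $r$-rose with petals colored $c_1,\ldots,c_r$ rather than disjoint unknots; these are interconvertible, but the rose is easier to manipulate because all edges share a vertex, so the $\RR_5$ move applies directly. The paper then proves three short lemmas (Inversion, Switch, Transvection) showing that on a standard rose one can realize any Nielsen transformation on the tuple of edge colors. These play exactly the role of your Loop Addition/Removal Lemma, but they are simpler to establish: your inductive construction of $U_h$ has to repeatedly touch $E$, verify Wirtinger relations at every new vertex, and then pull the result away, whereas transvection on a rose is a three-line sequence of $\RR_5$ and $\RR_2$ moves.

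Second, and more importantly, your stripping phase is where the real work hides, and the paper handles it by a device you do not mention: at each crossing of $D$, add a color-$1$ edge between the two strands (move~(5)), contract it (move~(4)), and simplify (move~(2)). This replaces every crossing by a $4$-valent vertex, yielding a crossingless diagram in one pass; contracting a maximal tree then gives a rose immediately. Your proposal to use ``crossing-change resolutions in the spirit of Example~\ref{ex:crossing_change}'' is less clean: that example treats a clasp (two crossings), produces a rung vortex colored by a commutator, and does not directly resolve an arbitrary single crossing. You would end up either rediscovering the crossing-to-vertex trick or layering extra bookkeeping (pairing crossings via $\RR_2$, managing the rung edges, converting the resulting graph to disjoint loops). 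None of this is fatal---your route can be made rigorous---but the paper's combination of crossing-to-vertex followed by Nielsen moves on the rose is both shorter and avoids exactly the obstacles you flag at the end of your proposal.
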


Theorem~\ref{thm:top_classif} follows from Theorem~\ref{thm:comb_classif} and the above commutative diagram.
Indeed, Theorem~\ref{thm:comb_classif} implies that the vertical map is bijective, and thus $\Phi$ is also a bijection.

\begin{remark}
\label{rem:basepoint}
In the case where we do not specify the basepoint, the above diagram induces the commutative diagram
\[
\xymatrix{
\{\text{$G$-colored diagrams}\}/\text{(1)--(7)} \ar[r]^-\Psi \ar@{->>}[d] & \SS_G^\fg/\text{conjugacy}, \\
\coprod_\Gamma \Map(\R^3\setminus \Gamma, X_G)/{\sim} \ar[ru]_-\Phi &
}
\]
in which all three maps are bijections.
In this remark, $\sim$ stands for the equivalence relation that is defined just as $\sim$ in Definition~\ref{def:equivalence}, except that the ambient isotopy $h_t$ is not assumed to be basepoint-preserving any more.
\end{remark}

As a byproduct, in a somewhat indirect way, we also obtain the claim that the moves (1)--(6) of the previous section are (necessary and) sufficient to describe diagrammatically the equivalence relation $\sim$ of Section~\ref{sec:Mathematical model of global defects}.
The same can be said regarding the moves (1)--(7) and the basepoint-free version of $\sim$.
Indeed, this is exactly what the bijectivity of the `vertical maps' of the above commutative diagrams means.

\begin{remark}
\label{rem:S3}
In Theorems~\ref{thm:top_classif}, \ref{thm:comb_classif} and Remark~\ref{rem:basepoint}, $\R^3$ can be replaced with the closed $3$-ball $D^3$.
In the case where the ambient space is $S^3$ with basepoint $p_0$, similar results hold after replacing $\SS_G^\fg$ with $\SS_G^\fg\setminus\{\{1\}\}$ and excluding $G$-colored diagrams $D$ with $\Psi(D)=\{1\}$.
This is necessary because if $\Gamma$ is empty then 
$\Map_0(S^3,X_G)/{\simeq}=\pi_3(X_G)$ 
may be non-trivial, as is the case when $G=Q$ and $X_G=S^3/Q$ which has $\pi_3(S^3/Q)\cong\Z$.
Note also that the mapping
\[
\coprod_\Gamma \Map((D^3\setminus \Gamma,\partial D^3), (X_G,x_0))/{\sim} \to \coprod_\Gamma \Map_0(S^3\setminus \Gamma, X_G)/{\sim},
\]
defined by extending via the constant map, is a bijection.
Here the left-hand side describes textures that are uniform far from the origin. 
\end{remark}

As we have seen in Section~\ref{subsec:quaternion}, 
for the quaternion group $Q$, we have 
\[
\SS_Q^\fg = 
\SS_Q^\fg/\text{conjugacy} = 
\{ \{ 1 \},\ 
\langle -1 \rangle,\ 
\langle i \rangle,\ 
\langle j \rangle,\ 
\langle k \rangle,\ Q \} . 
\]
Thus, by 
Theorems~\ref{thm:top_classif}, \ref{thm:comb_classif} and 
Remark~\ref{rem:basepoint} above, 
we have the following. 
\begin{corollary}
\label{cor:classification biaxial}
The equivalence classes of non-trivial global defects in biaxial nematic 
liquid crystals in $\R^3$ or $S^3$ are represented by the five 
diagrams \textup{(}excluding $\emptyset$\textup{)}
shown in Figure~\ref{fig:classification_correspondence}.
%
%
\end{corollary}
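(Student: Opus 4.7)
The corollary is essentially an application of the general classification machinery to the concrete group $Q$, so my plan is to unpack exactly what Theorems~\ref{thm:top_classif} and~\ref{thm:comb_classif} (together with Remark~\ref{rem:basepoint} and Remark~\ref{rem:S3}) say in this specific case, and then exhibit the five diagrams as representatives.

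\textbf{Step 1: reduction to a count of subgroups.} First I would invoke Theorem~\ref{thm:top_classif} (and its $S^3$ counterpart in Remark~\ref{rem:S3}) to convert the classification of equivalence classes of textures into the enumeration of $\SS_Q^{\fg}$, respectively $\SS_Q^{\fg}\setminus\{\{1\}\}$. Since the problem statement also allows ``$\R^3$ or $S^3$'' without specifying a basepoint, I would actually appeal to Remark~\ref{rem:basepoint} and use $\SS_Q^{\fg}/\text{conjugacy}$. Here a key observation, already recorded in Section~\ref{subsec:quaternion}, is that the six subgroups of $Q$ are all normal, so conjugacy is trivial on them and the two sets $\SS_Q^{\fg}$ and $\SS_Q^{\fg}/\text{conjugacy}$ coincide and consist of the six subgroups $\{1\}$, $\langle -1\rangle$, $\langle i\rangle$, $\langle j\rangle$, $\langle k\rangle$, $Q$. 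Thus there are exactly six equivalence classes in $\R^3$; excluding the trivial subgroup $\{1\}$ (realized only by the empty defect) leaves five non-trivial classes, which is also the count produced in $S^3$ after the exclusion prescribed in Remark~\ref{rem:S3}.

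\textbf{Step 2: exhibit representatives.} It then remains to verify that each of the five diagrams drawn in Figure~\ref{fig:classification_correspondence} is a valid $Q$-colored diagram and that the map $\Psi$ of Theorem~\ref{thm:comb_classif} sends it to the claimed subgroup. For the four unknotted loops coloured $-1$, $i$, $j$, $k$ respectively, there are no vertex relations and the only Wirtinger relator around the single arc is automatic; the subgroup generated by the color of the unique arc is $\langle -1\rangle$, $\langle i\rangle$, $\langle j\rangle$ or $\langle k\rangle$, as required. For the final diagram (a graph with a degree-three vertex whose three incident arcs carry the colors $i$, $j$, $k$) I would check that with the orientations drawn, the vertex relator $c_1^{\varepsilon_1}c_2^{\varepsilon_2}c_3^{\varepsilon_3}$ reduces to $ijk^{-1}=1$ (or a conjugate thereof), so that the coloring is admissible; the three generators $i,j,k$ then manifestly generate $Q$, giving $\Psi=Q$.

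\textbf{Step 3: assembly.} Combining Steps~1 and~2 with the bijectivity of $\Psi$ shows that the listed diagrams hit five pairwise inequivalent non-trivial classes and that no further classes exist, which is precisely the statement of the corollary. I do not anticipate any genuine obstacle here: the content is essentially bookkeeping on top of Theorems~\ref{thm:top_classif}--\ref{thm:comb_classif}. The only point that requires a moment of care is the degree-three vertex in the $Q$-labeled representative, where one must pick orientations so that the vertex relator holds in $Q$; once this is done the verification is immediate.
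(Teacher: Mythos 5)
Your proposal is correct and follows essentially the same route as the paper, which derives the corollary in one line from the enumeration $\SS_Q^\fg = \SS_Q^\fg/\text{conjugacy} = \{\{1\}, \langle -1\rangle, \langle i\rangle, \langle j\rangle, \langle k\rangle, Q\}$ together with Theorems~\ref{thm:top_classif}, \ref{thm:comb_classif} and Remark~\ref{rem:basepoint}. Your extra verification that the five diagrams are admissibly colored and hit the five non-trivial subgroups under $\Psi$ (including the check $ij=k$, hence $ijk^{-1}=1$, at the theta-graph vertex) is left implicit in the paper but is accurate.
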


\begin{example}
The three spatial graph diagrams depicted in Figure~\ref{fig:classification_example2} are all equivalent. 
Indeed, the sugroups of $Q$ that their colors generate are all equal to $\langle i \rangle$. 
Thus, by Corollary~\ref{cor:classification biaxial}, we see immediately that they are equivalent without actually relating them by moves.
Similarly, the four spatial graph diagrams depicted in Figure~\ref{fig:classification_example1} are all equivalent.
The subgroups of $Q$ generated by their colors are all equal to the 
whole group $Q$ in this case. 
In the appendix, an actual explicit
sequence of moves 
from a colored $(2,4)$-torus knot diagram to 
a colored theta-graph diagram will be given.

\begin{figure}[htbp]
\centering\includegraphics[height=2.8cm]{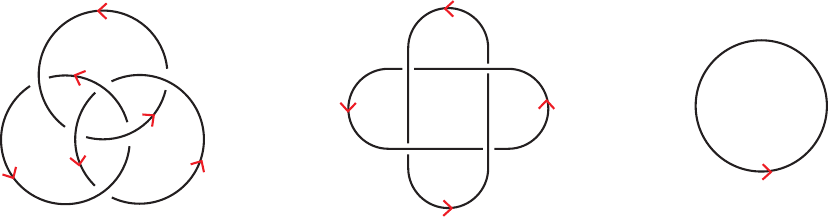}
\begin{picture}(400,0)(0,0)
\put(80,93){\color{red} $-1$}
\put(40,23){\color{red} $i$}
\put(126,23){\color{red} $i$}
\put(103,38){\color{red} $-1$}
\put(73,70){\color{red} $i$}
\put(70,28){\color{red} $i$}

\put(140,50){$\sim$}

\put(205,94){\color{red} $-1$}
\put(205,3){\color{red} $-1$}
\put(165,50){\color{red} $i$}
\put(255,50){\color{red} $i$}

\put(275,50){$\sim$}

\put(328,16){\color{red} $i$}
\end{picture}
\caption{Equivalent defects corresponding to (the conjugacy class of) the subgroup 
$\langle i \rangle$ of $Q$.}
\label{fig:classification_example2}
\end{figure}

\begin{figure}[htbp]
\centering\includegraphics[height=2.8cm]{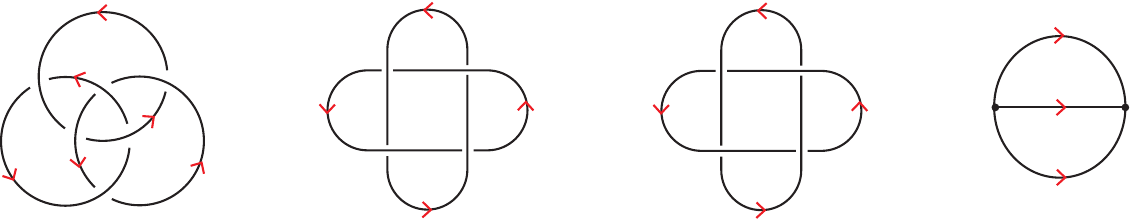}
\begin{picture}(400,0)(0,0)
\put(30,93){\color{red} $i$}
\put(-15,23){\color{red} $j$}
\put(71,23){\color{red} $k$}
\put(48,38){\color{red} $-i$}
\put(18,70){\color{red} $-j$}
\put(3,28){\color{red} $-k$}

\put(86,50){$\sim$}

\put(150,93){\color{red} $i$}
\put(145,3){\color{red} $-i$}
\put(104,50){\color{red} $j$}
\put(191,50){\color{red} $-j$}

\put(210,50){$\sim$}

\put(271,93){\color{red} $k$}
\put(262,3){\color{red} $-k$}
\put(225,50){\color{red} $j$}
\put(313,50){\color{red} $-j$}

\put(335,50){$\sim$}

\put(378,85){\color{red} $i$}
\put(378,60){\color{red} $j$}
\put(378,15){\color{red} $k$}
\end{picture}
\caption{Equivalent defects corresponding to (the conjugacy class of) the subgroup $Q$ of $Q$.}
\label{fig:classification_example1}
\end{figure}
\end{example}

\begin{remark}
It is natural to ask whether a defect can be deformed into a knot, or equivalently, for a given finitely generated subgroup $H$ of $G$, whether there exist a knot $K$ and continuous map $f\in \Map_0(\R^3\setminus K, X_G)$ satisfying $\Phi([f])=H$. 
If it happens, then $H$ should be normally generated by one element, and the converse is also true due to \cite{Gon75}.
In the case of $Q$, for the subgroups $\ang{-1}$, $\ang{i}$, $\ang{j}$, and $\ang{k}$, we already find such knots in Figure~\ref{fig:classification_correspondence}; on the other hand there is no such knot for $H=\ang{i,j}$ since it is not normally generated by one element. 
More generally, 
by a result of Gonz\'alez-Acu\~na \cite{Gon75}, 
$H$ is normally generated by $r$ elements if and only if there exist an $r$-component link $L$ and $f\in \Map_0(\R^3\setminus L, X_G)$ satisfying $\Phi([f])=H$.
\end{remark}

\appendix
\section{Proof of Theorem~\ref{thm:comb_classif}}

An $n$-\textit{rose} is a graph consisting of a single 
vertex and $n$ loops. 
A diagram of a spatial $n$-rose with no crossings is called 
the \textit{standard diagram} of a spatial $n$-rose. 
In order to prove Theorem~\ref{thm:comb_classif}, 
we first establish three lemmas regarding equivalences of standard 
diagrams of spatial roses colored by elements of a group $G$. 

\begin{lemma}[Inversion]
\label{lemma:Inversion}
If a colored standard diagram of a spatial rose is obtained from 
another by reversing the orientation of just one edge, then 
they are equivalent, 
that is, they can be related by a finite sequence of moves \textup{(1)--(6)} introduced in Section~\ref{sec:Colored spatial graph diagrams}.
%
%
\end{lemma}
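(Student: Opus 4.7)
The plan is to open the loop up into a path between two distinct vertices, invoke move (1) on each of its two pieces, and then close things back up again. Let $\ell$ be the loop whose orientation is to be reversed, let $v$ be the sole vertex of the rose, and let $c$ be the color of $\ell$. First I would apply edge-subdivision (move (3), see Figure~\ref{fig:local_moves_3-4}) at an interior point of $\ell$, introducing a new degree-$2$ vertex $w$ and splitting $\ell$ into two arcs $e_1, e_2$, both colored $c$, oriented so as to form a path $v \to w \to v$ consistent with the original orientation of $\ell$.

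Next I would apply move (1) to $e_1$ and then to $e_2$ in turn. Since each of them is now a path running between the two \emph{distinct} vertices $v$ and $w$, this is precisely the situation pictured in Figure~\ref{fig:local_move_1}: each edge has its orientation reversed and its color changed from $c$ to $c^{-1}$. After both applications, the two arcs at $w$ carry the common color $c^{-1}$ with one tail and one head, so I can apply edge-combining (move (3)) at $w$ to remove the degree-$2$ vertex. The result is a single loop with color $c^{-1}$ and orientation opposite to that of the original $\ell$, which is the desired diagram (the color must be inverted together with the orientation for the relator at $v$ to remain satisfied).

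Along the way one should verify that every intermediate picture is a valid $G$-colored diagram. At the new vertex $w$ the relator reduces to $c^{-1} \cdot c$ after the two orientation reversals, which is trivially $1$. At $v$, the two half-edges of the loop contribute $c^{\varepsilon_1}, c^{\varepsilon_2}$ with $\{\varepsilon_1, \varepsilon_2\} = \{+1, -1\}$ to the relator word both before and after the moves, since inverting the color of an arc simultaneously with the orientation of each of its half-edges preserves every term, $(c^{-1})^{-\varepsilon_i} = c^{\varepsilon_i}$; the contributions of all other loops of the rose are untouched throughout, so the full cyclic relator word at $v$ is literally unchanged.

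The only real obstacle is noticing that the orientation-reversal move (1), as formulated in Figure~\ref{fig:local_move_1}, is stated for paths running between two distinct vertices, so it cannot be applied directly to a loop at a single vertex. The role of the edge-subdivision is precisely to create an auxiliary second vertex so that the move becomes available; after that, the argument is pure bookkeeping.
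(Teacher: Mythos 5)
Your sequence of moves is legitimate, but it proves the wrong statement. The net effect of subdividing the loop, applying move (1) to each of the two resulting arcs, and recombining is exactly the same as applying move (1) to the whole edge at once: the orientation is reversed \emph{and} the color is replaced by $c^{-1}$. As the paper notes in its description of move (1), this ``does not change the diagram'' --- it is a relabeling that represents the same homomorphism $\pi_1(\R^3\setminus\Gamma)\to G$ (the meridian of that edge still maps to the same group element once orientation conventions are accounted for). The lemma, however, asserts the equivalence of the rose with the orientation of one edge reversed and the color \emph{kept equal to} $c$; this is visible in Figure~\ref{fig:inversion_proof}, where the labels $c_1,\dots,c_n$ are unchanged in all three panels, and it is how the lemma is used in the proof of Lemma~\ref{lemma:Transvection} and of the main theorem (it is what lets one realize $c_1\mapsto c_1c_2^{-1}$ rather than only $c_1\mapsto c_1c_2$). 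Relative to a fixed orientation, the target diagram carries the color $c^{-1}$ where the original carried $c$, so it corresponds to a genuinely different homomorphism, and some geometric input is needed: the paper flips the loop over in space, via an $\RR_1$ move followed by an $\RR_5$ move, which reverses the loop's apparent orientation while transporting its color.

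Your parenthetical justification --- that ``the color must be inverted together with the orientation for the relator at $v$ to remain satisfied'' --- is also not correct, and it is what steers you toward the tautological version. In the standard rose diagram the two half-edges of the loop at $v$ are adjacent and contribute $c^{+1}c^{-1}$ (one tail, one head) to the vertex relator; after reversing the orientation and keeping the color they contribute $c^{-1}c^{+1}$, which is still trivial. So the diagram with reversed orientation and unchanged color is a perfectly valid $G$-colored diagram, and it is precisely the one you need to reach. (Your side observation that move (1) might not apply verbatim to a loop, and that an edge-subdivision cures this, is reasonable bookkeeping, but it does not address the actual content of the lemma.)
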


\begin{proof}
Figure~\ref{fig:inversion_proof} shows 
a sequence of moves between the two diagrams. 
\begin{figure}[htbp]
\centering\includegraphics[width=14cm]{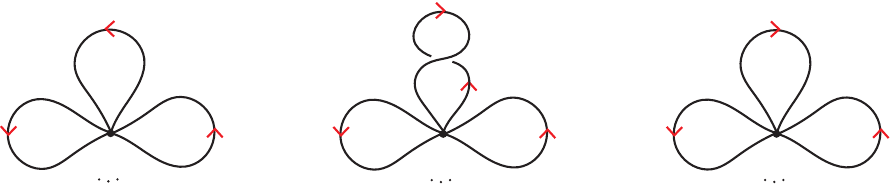}
\begin{picture}(400,0)(0,0)
\put(47,90){\color{red} $c_1$}
\put(-10,33){\color{red} $c_2$}
\put(103,33){\color{red} $c_n$}

\put(115,40){$\xrightarrow{(2)\RR_1}$}

\put(194,97){\color{red} $c_1$}
\put(139,33){\color{red} $c_2$}
\put(252,33){\color{red} $c_n$}

\put(265,40){$\xrightarrow{(2)\RR_5}$}

\put(345,90){\color{red} $c_1$}
\put(288,33){\color{red} $c_2$}
\put(401,33){\color{red} $c_n$}
\end{picture}
\caption{A sequence of moves that realizes an inversion.}
\label{fig:inversion_proof}
\end{figure}
\end{proof}

\begin{lemma}[Switch]
If a colored standard diagram of spatial rose is obtained from 
another by just switching the colors of two adjacent edges, then 
they are equivalent. 
%
%
\end{lemma}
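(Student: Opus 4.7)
The plan is to realize the switch via a sequence of vertex-splittings, Reidemeister moves, and edge-contractions, following the pattern of the proof of Lemma~\ref{lemma:Inversion}.

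First, I would apply vertex-splitting at the central vertex of the spatial rose to isolate the two adjacent loops (with colors $a, b$, say) to be switched onto a new vertex $v_1$, which is connected by a color-$1$ edge $e$ to a second vertex $v_2$ carrying the remaining loops. The color of $e$ is forced to be $1$, because the vertex relation at $v_1$ reduces to $a\, a^{-1} b\, b^{-1} \col(e) = \col(e) = 1$.

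Second, I would carry out the switch in the local configuration at $v_1$. A further vertex-splitting separates the two loops onto two new vertices $u_L$ and $u_R$ joined by a color-$1$ edge $e'$, with $e$ remaining attached to one of them. The core of the proof is then to ``rotate'' this dumbbell, swapping the spatial positions of $u_L$ and $u_R$. Since planar isotopy alone cannot achieve this (the attachment of $e$ to $v_2$ constrains the motion in the plane), I would realize the rotation by a finite sequence of $\RR_1$ and $\RR_5$ moves together with $\RR_2$ simplifications as needed. The enabling fact is that the auxiliary vertices $u_L, u_R$ each satisfy a trivial vertex relation (each attached loop contributes a factor $x\, x^{-1} = 1$), so that $\RR_5$ applications slide strands past these vertices without introducing any color conjugation. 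Orientation discrepancies introduced along the way can be corrected using Lemma~\ref{lemma:Inversion}.

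Finally, I would apply edge-contractions on the color-$1$ edges $e'$ and $e$ to re-merge the vertices back into the original central vertex of the rose, producing a rose with the colors of the two originally adjacent loops now switched.

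The main obstacle is the explicit construction of the Reidemeister move sequence in the second step, ensuring that the switch is ``clean''---that is, free of residual conjugations. The crucial observation making this possible is the triviality of the vertex relations at the auxiliary vertices, which turns $\RR_5$ into a color-preserving strand slide and allows the rotation to be executed entirely by local combinatorial moves.
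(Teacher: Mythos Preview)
Your approach is substantially more involved than the paper's, and the crucial step is left unsubstantiated. The paper's proof is direct: four applications of the vertex move $\RR_5$ at the central vertex of the rose suffice to swap the two adjacent loops, with no vertex-splitting, edge-contraction, or auxiliary constructions. The move $\RR_5$ permutes the cyclic order of half-edges at a vertex; since two adjacent loops contribute four consecutive half-edges, four such moves carry one loop past the other.

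Your detour through vertex-splittings does not simplify the problem---it merely postpones it. After isolating the two loops onto a dumbbell, you still face the task of swapping their positions, and your description of how to do this (``a finite sequence of $\RR_1$ and $\RR_5$ moves together with $\RR_2$ simplifications'') is a promise, not a construction. Moreover, the justification you offer is confused: you invoke the triviality of the vertex relations at $u_L, u_R$ to claim that ``$\RR_5$ applications slide strands past these vertices without introducing any color conjugation,'' but sliding a strand past a vertex is the move $\RR_4$, not $\RR_5$, and the quantity governing conjugation under $\RR_4$ is the color of the sliding strand, not the vertex relation. If you were to actually carry out the rotation, you would find yourself needing essentially the same $\RR_5$ moves that the paper applies directly at the original vertex---so the splitting gains nothing.
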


\begin{proof}
Figure \ref{fig:switch_proof} shows 
a sequence of moves between the two diagrams. 
\begin{figure}[h!]
\centering\includegraphics[width=13.5cm]{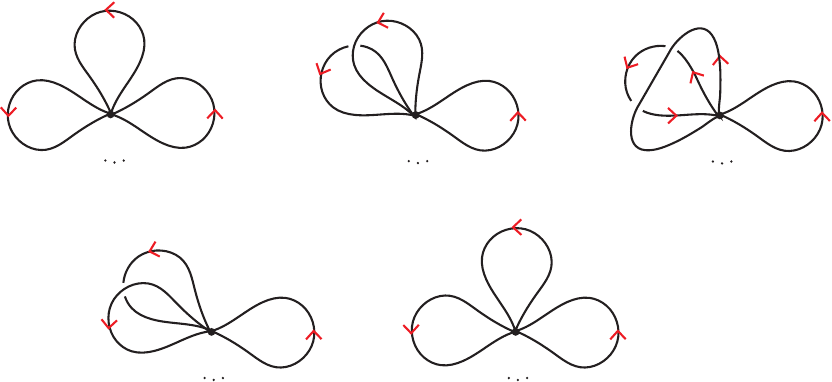}
\begin{picture}(400,0)(0,0)

\put(55,190){\color{red} $c_1$}
\put(-3,133){\color{red} $c_2$}
\put(113,133){\color{red} $c_n$}

\put(120,140){$\xrightarrow{(2)\RR_5}$}

\put(177,186){\color{red} $c_1$}
\put(145,162){\color{red} $c_2$}
\put(253,133){\color{red} $c_n$}

\put(258,140){$\xrightarrow{(2)\RR_5}$}

\put(345,161){\color{red} $c_1$}
\put(285,162){\color{red} $c_2$}
\put(394,133){\color{red} $c_n$}

\put(13,40){$\xrightarrow{(2)\RR_5}$}

\put(72,80){\color{red} $c_2$}
\put(45,35){\color{red} $c_1$}
\put(158,30){\color{red} $c_n$}

\put(165,40){$\xrightarrow{(2)\RR_5}$}

\put(245,90){\color{red} $c_2$}
\put(185,30){\color{red} $c_1$}
\put(298,30){\color{red} $c_n$}
\end{picture}
\caption{A sequence of moves that realizes a switch.}
\label{fig:switch_proof}
\end{figure}
\end{proof}

\begin{lemma}[Transvection]
\label{lemma:Transvection}
The two $G$-colored spatial graph diagrams shown in 
Figure~\ref{fig:transvection} are equivalent.
\begin{figure}[h!]
\centering\includegraphics[width=8cm]{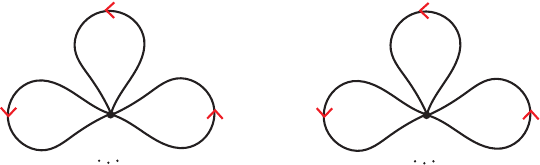}
\begin{picture}(400,0)(0,0)
\put(130,85){\color{red} $c_1$}
\put(76,33){\color{red} $c_2$}
\put(182,33){\color{red} $c_n$}

\put(196,35){$\sim$}

\put(258,85){\color{red} $c_1 c_2$}
\put(209,33){\color{red} $c_2$}
\put(315,33){\color{red} $c_n$}
\end{picture}
\caption{Transvection.}
\label{fig:transvection}
\end{figure}
\end{lemma}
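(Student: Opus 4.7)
My plan is to prove the Transvection lemma by exhibiting an explicit sequence of moves from the left diagram of Figure~\ref{fig:transvection} to the right one, in the same constructive spirit as the proofs of Lemma~\ref{lemma:Inversion} and the Switch lemma above. Since both diagrams are roses with $n$ loops that differ only in the color of the first loop ($c_1$ versus $c_1c_2$), the task reduces to a localized recoloring that leaves the remaining loops unchanged.

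The strategy is to combine edge-subdivision (move (3)), vertex-splitting (move (4)), and the Reidemeister move $\RR_5$ so as to effectively slide a sub-arc of the $c_1$-loop along the $c_2$-loop, thereby multiplying its color on the right by $c_2$. Concretely, I first apply edge-subdivision to the $c_1$-loop, inserting a degree-2 vertex $w$ close to the central vertex $v$, with both resulting arcs colored $c_1$. Next, I apply a vertex-splitting at $v$ to peel off the $c_1$-half-edges together with an adjacent half-edge of the $c_2$-loop onto a new vertex $v'$; by choosing which half-edges to split off and the orientation of the new edge, I can arrange for the color of the new edge---determined by the vertex relation $c_1^{\varepsilon_1}\cdots c_k^{\varepsilon_k}=1$---to involve $c_2$. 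Suitable $\RR_5$ moves then re-route this new edge, and via the Wirtinger and vertex relations one of the resulting arcs acquires the factor $c_2$. Finally, edge-combining (move (3)) and edge-contraction (move (4)) restore the rose structure, with the first loop now colored $c_1 c_2$.

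The main obstacle will be the careful bookkeeping of orientations and cyclic positions at $v$, so as to ensure that the color change is exactly the right-multiplication $c_1 \mapsto c_1 c_2$ rather than a conjugation $c_1 \mapsto c_2 c_1 c_2^{-1}$ or a left-multiplication $c_1 \mapsto c_2 c_1$, which are what naive choices more naturally produce. Resolving this likely requires auxiliary applications of Lemma~\ref{lemma:Inversion} to flip orientations and of the Switch lemma to rearrange the cyclic order at $v$, both of which are already at our disposal as established building blocks. Once a concrete sequence is worked out, the proof will appear as a figure analogous to Figures~\ref{fig:inversion_proof} and~\ref{fig:switch_proof}, with a handful of intermediate panels indicating the moves applied at each step.
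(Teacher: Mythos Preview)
Your plan is correct and essentially the same as the paper's: exhibit an explicit finite sequence of the local moves, presented as a figure. The paper's actual sequence (Figure~\ref{fig:transvection_proof}) is Inversion $\to$ (4) $\to$ (4) $\to$ $\RR_5$ $\to$ $\RR_5$ $\to$ $\RR_2$ $\to$ Inversion, confirming your instinct that Lemma~\ref{lemma:Inversion} is what secures the right-multiplication $c_1\mapsto c_1c_2$ rather than a conjugation; the edge-subdivision (3) and the Switch lemma you tentatively invoke turn out to be unnecessary.
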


\begin{proof}
Figure~\ref{fig:transvection_proof} shows 
a sequence of moves between the two diagrams. 
\begin{figure}[htbp]
\centering\includegraphics[width=14cm]{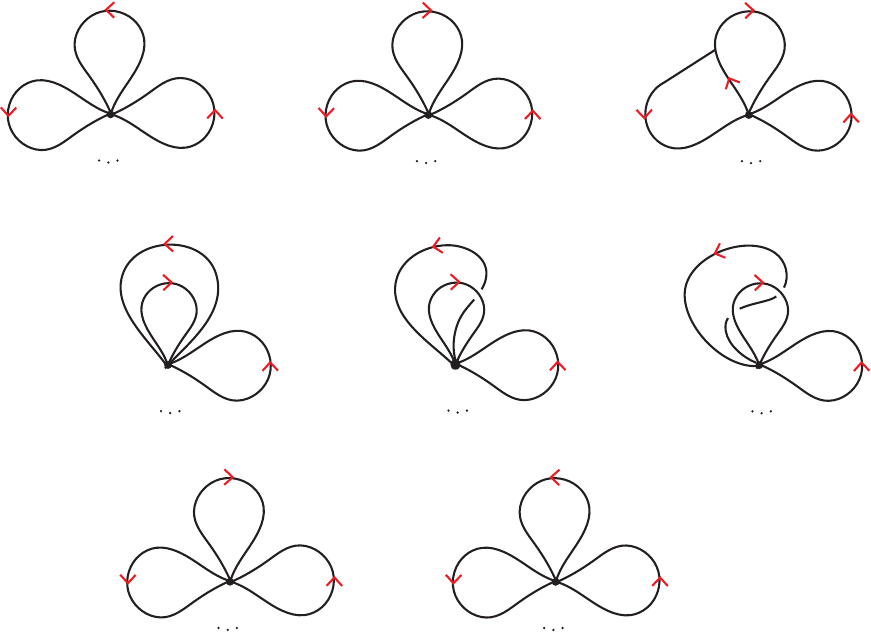}
\begin{picture}(400,0)(0,0)
\put(49,303){\color{red} $c_1$}
\put(-8,243){\color{red} $c_2$}
\put(105,243){\color{red} $c_n$}

\put(105,250){$\overset{\text{\scriptsize Lemma~\ref{lemma:Inversion}}}{\longrightarrow}$}

\put(192,303){\color{red} $c_1$}
\put(138,243){\color{red} $c_2$}
\put(250,243){\color{red} $c_n$}

\put(258,250){$\overset{(4)}{\longrightarrow}$}

\put(341,303){\color{red} $c_1$}
\put(283,243){\color{red} $c_2$}
\put(313,258){\color{red} $c_1 c_2$}
\put(396,243){\color{red} $c_n$}

\put(20,140){$\overset{(4)}{\longrightarrow}$}

\put(75,198){\color{red} $c_2$}
\put(70,178){\color{red} $c_1 c_2$}
\put(130,131){\color{red} $c_n$}

\put(146,140){$\xrightarrow{(2)\RR_5}$}

\put(197,196){\color{red} $c_2$}
\put(195,178){\color{red} $c_1 c_2$}
\put(261,131){\color{red} $c_n$}

\put(270,140){$\xrightarrow{(2)\RR_5}$}

\put(325,195){\color{red} $c_2$}
\put(335,178){\color{red} $c_1 c_2$}
\put(400,131){\color{red} $c_n$}

\put(20,40){$\xrightarrow{(2)\RR_2}$}

\put(95,90){\color{red} $c_1 c_2$}
\put(45,32){\color{red} $c_2$}
\put(160,32){\color{red} $c_n$}

\put(162,40){$\overset{\text{\scriptsize Lemma~\ref{lemma:Inversion}}}{\longrightarrow}$}

\put(245,90){\color{red} $c_1 c_2$}
\put(194,32){\color{red} $c_2$}
\put(309,32){\color{red} $c_n$}
\end{picture}
\caption{A sequence of moves that realizes a transvection.}
\label{fig:transvection_proof}
\end{figure}
\end{proof}

\begin{proof}[Proof of Theorem~\ref{thm:comb_classif}]
The surjectivity of $\Psi$ is clear since, for a subgroup $H$ of $G$ generated by $c_1, c_2, \ldots, c_r$, the standard diagram of the $r$-rose with colors $c_1, c_2, \ldots, c_r$ on its edges represents an element in the preimage $\Psi^{-1}(H)$. 

It remains to show that $\Psi$ is injective. 
Let $H$ be a finitely generated subgroup of $G$. 
Let $D$ be an arbitrary $G$-colored spatial graph diagram in the preimage $\Psi^{-1}(H)$. 
We choose and fix generators $c_1,c_2,\ldots,c_r$ of $H$. 
It suffices to show that $D$ is equivalent to the standard 
diagram of the $r$-rose with colors 
$c_1, c_2, \ldots, c_r$ on its edges. 
At each crossing of $D$, 
perform the sequence of moves shown in 
Figure~\ref{fig:unknotting_tunnel}. 
\begin{figure}[htbp]
\centering\includegraphics[width=12cm]{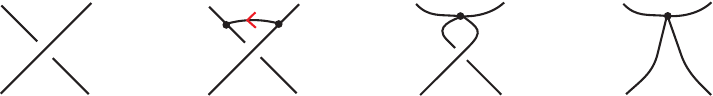}
\begin{picture}(400,0)(0,0)
\put(90,32){$\overset{(5)}{\longrightarrow}$} 
\put(190,32){$\overset{(4)}{\longrightarrow}$} 
\put(290,32){$\overset{(2)}{\longrightarrow}$} 
\put(148,55){\color{red} $1$} 
\end{picture}
\caption{A sequence of moves that changes a crossing point to a vertex.}
\label{fig:unknotting_tunnel}
\end{figure}
The resulting diagram $D'$ is then a graph without 
crossings. 
By applying edge-contractions successively along a maximal tree of the diagram $D'$, 
we get a $G$-colored standard 
diagram $D''$ of the $n$-rose for some 
non-negative integer $n$. 
Let $b_1 , \ldots , b_n$ be the colors of 
the edges of $D''$. 
Now using the moves (4) and (5) appropriately, 
we can easily add one more loop, with the color 
$1$, to $D''$. 
Since the element $c_1 \in H$ can be 
expressed as a product of 
$b_1^{\pm1}, \ldots , b_m^{\pm1}$, 
we can change the color $1$ of the loop 
to $c_1$ by applying 
Lemmas~\ref{lemma:Inversion}--\ref{lemma:Transvection}, see Figure~\ref{fig:change_of_generators}. 
\begin{figure}[h]
\centering\includegraphics[width=14cm]{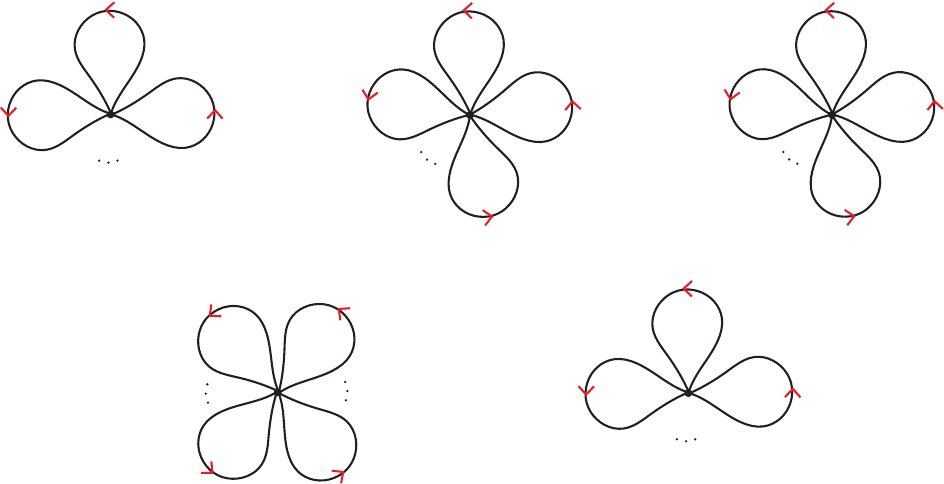}
\begin{picture}(400,0)(0,0)
\put(43,219){\color{red} $b_1$}
\put(-12,165){\color{red} $b_2$}
\put(97,165){\color{red} $b_n$}

\put(108,165){$\xrightarrow{(4),\ (5)}$}

\put(194,219){\color{red} $b_1$}
\put(142,174){\color{red} $b_2$}
\put(202,112){\color{red} $b_n$}
\put(248,169){\color{red} $1$}

\put(267,165){$\longrightarrow$}

\put(347,219){\color{red} $b_1$}
\put(296,174){\color{red} $b_2$}
\put(354,112){\color{red} $b_n$}
\put(401,169){\color{red} $c_1$}

\put(5,49){$\longrightarrow ~ \cdots ~ \longrightarrow$}

\put(78,85){\color{red} $b_1$}
\put(78,8){\color{red} $b_n$}
\put(151,85){\color{red} $c_1$}
\put(151,8){\color{red} $c_r$}

\put(166,49){$\longrightarrow ~ \cdots ~ \longrightarrow$}

\put(288,102){\color{red} $c_1$}
\put(234,49){\color{red} $c_2$}
\put(341,49){\color{red} $c_r$}
\end{picture}
\caption{A sequence of moves that relates the standard diagram of a rose with colors $b_1 , \ldots , b_n$ to that with $c_1 , \ldots , c_r$.}
\label{fig:change_of_generators}
\end{figure}
Repeating this process $r$ times, 
we get the standard diagram of the $(n+r)$-rose with colors 
$b_1, \ldots, b_n, c_1 , \dots, c_r$. 
Then, by performing the inverse of this process 
reversing the roles of the $b_i$'s and the $c_j$'s, 
we finally get the standard 
diagram of the $r$-rose with colors $c_1 , \dots, c_r$, as desired. 
\end{proof}

Figure~\ref{fig:24torus_link} shows a sequence of moves 
from a colored $(2,4)$-torus knot diagram to 
a colored theta-graph diagram according to the flow of the above proof. 

\begin{figure}[h]
\centering\includegraphics[width=14.5cm]{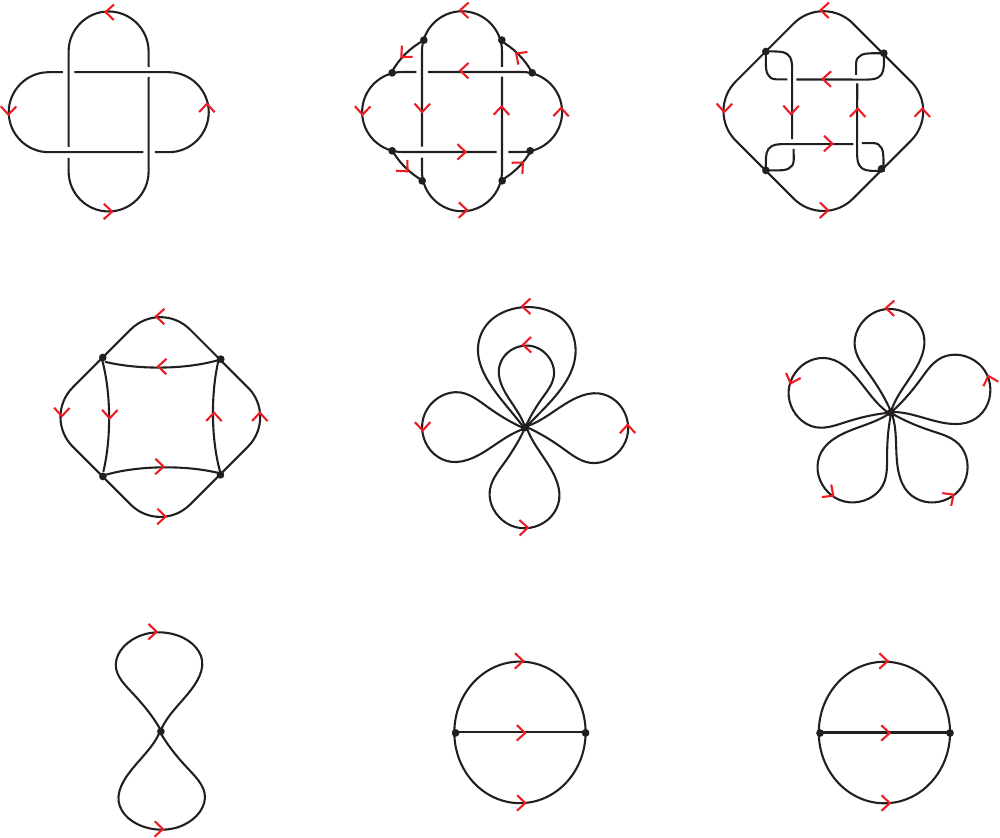}
\begin{picture}(400,0)(0,0)
\put(37,360){\color{red}$i$}
\put(30,258){\color{red}$-i$}
\put(-15,310){\color{red}$j$}
\put(84,310){\color{red}$-j$}

\put(105,310){$\overset{(5)}{\longrightarrow}$}

\put(184,360){\color{red}$i$}
\put(178,335){\color{red}$-j$}
\put(177,260){\color{red}$-i$}
\put(180,283){\color{red}$j$}
\put(132,310){\color{red}$j$}
\put(159,310){\color{red}$i$}
\put(229,310){\color{red}$-j$}
\put(205,310){\color{red}$-i$}
\put(152,338){\color{red}$1$}
\put(152,280){\color{red}$1$}
\put(212,338){\color{red}$1$}
\put(212,280){\color{red}$1$}

\put(253,315){$\overset{(4)}{\longrightarrow}$}

\put(333,360){\color{red}$i$}
\put(328,332){\color{red}$-j$}
\put(327,260){\color{red}$-i$}
\put(330,287){\color{red}$j$}
\put(280,310){\color{red}$j$}
\put(309,310){\color{red}$i$}
\put(379,310){\color{red}$-j$}
\put(350,310){\color{red}$-i$}

\put(-20,182){$\overset{(2)}{\longrightarrow}$}

\put(58,232){\color{red}$i$}
\put(55,213){\color{red}$-j$}
\put(51,132){\color{red}$-i$}
\put(56,153){\color{red}$j$}
\put(8,182){\color{red}$j$}
\put(29,182){\color{red}$i$}
\put(105,182){\color{red}$-j$}
\put(84,182){\color{red}$-i$}

\put(126,182){$\overset{(4)}{\longrightarrow}$}

\put(208,236){\color{red}$i$}
\put(202,204){\color{red}$-j$}
\put(201,128){\color{red}$-i$}
\put(156,178){\color{red}$j$}
\put(257,178){\color{red}$-j$}

\put(280,182){$\overset{(2)}{\longrightarrow}$}

\put(359,238){\color{red}$i$}
\put(304,203){\color{red}$-i$}
\put(404,203){\color{red}$-j$}
\put(327,144){\color{red}$j$}
\put(386,144){\color{red}$-j$}

\put(-20,55){$\xrightarrow{\text{Figure~\ref{fig:change_of_generators}}}$}

\put(55,102){\color{red}$i$}
\put(55,3){\color{red}$j$}

\put(126,55){$\overset{(4)}{\longrightarrow}$}

\put(206,91){\color{red}$i$}
\put(200,62){\color{red}$-k$}
\put(206,14){\color{red}$j$}

\put(280,55){$\overset{(2)}{\longrightarrow}$}

\put(356,91){\color{red}$i$}
\put(356,64){\color{red}$j$}
\put(356,14){\color{red}$k$}
\end{picture}
\caption{A sequence of moves from a $Q$-colored $(2,4)$-torus knot diagram to an (unknotted) $Q$-colored theta-graph diagram.}
\label{fig:24torus_link}
\end{figure}

\section{Tetrahedral nematic liquid crystals}
\label{sec:Tetrahedral}
As one more specific example in addition to biaxial nematic liquid crystals, 
we present here the classification of global defects in
local tetrahedral order, 
discussed in \cite{Tre84} and \cite{RoSe22}, among others. 
The fundamental group of such an order parameter space is the \emph{tetrahedral group} $T$, 
which is a finite group consisting of 12 elements, $1, i, j, k, \alpha^{\pm 1}, \beta^{\pm 1} , \gamma^{\pm 1} , 
\delta^{\pm 1}$, where the non-trivial elements are described in Figure~\ref{fig:tetrahedral_group}. 
\begin{figure}[h!]
\centering\includegraphics[width=14cm]{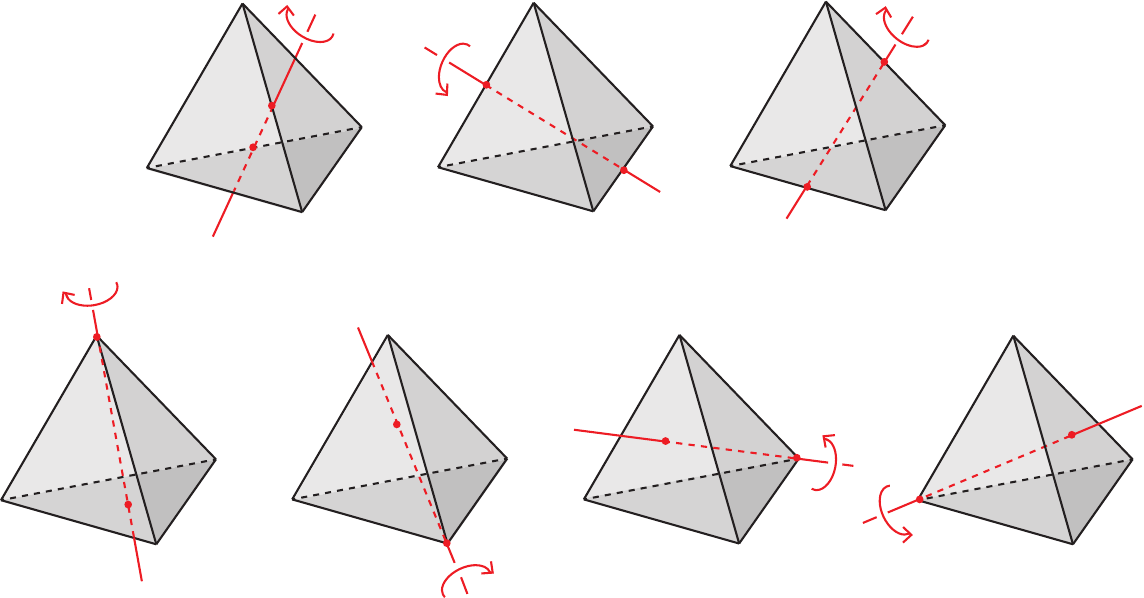}
\begin{picture}(400,0)(0,0)
\put(83,222){$1$}
\put(106,133){$2$}
\put(131,173){$3$}
\put(43,155){$4$}
\put(110,218){\color{red} $\pi$}
\put(87,125){$i$}

\put(185,222){$1$}
\put(208,133){$2$}
\put(233,173){$3$}
\put(145,155){$4$}
\put(141,203){\color{red} $\pi$}
\put(189,125){$j$}

\put(287,222){$1$}
\put(310,133){$2$}
\put(335,173){$3$}
\put(247,155){$4$}
\put(319,218){\color{red} $\pi$}
\put(291,125){$k$}

\put(38,104){$1$}
\put(55,19){$2$}
\put(80,57){$3$}
\put(-7,40){$4$}
\put(27,127){\color{red} $\frac{2\pi}{3}$}
\put(38,5){$\alpha$}

\put(134,107){$1$}
\put(151,20){$2$}
\put(182,57){$3$}
\put(94,42){$4$}
\put(163,0){\color{red} $\frac{2\pi}{3}$}
\put(140,5){$\beta$}

\put(236,107){$1$}
\put(258,19){$2$}
\put(280,62){$3$}
\put(196,42){$4$}
\put(300,56){\color{red} $\frac{2\pi}{3}$}
\put(242,5){$\gamma$}

\put(353,107){$1$}
\put(372,19){$2$}
\put(399,57){$3$}
\put(316,36){$4$}
\put(287,33){\color{red} $\frac{2\pi}{3}$}
\put(355,5){$\delta$}

\end{picture}
\caption{The three $\pi$-rotations $i$, $j$, $k$, and 
the four $2\pi/ 3$-rotations $\alpha, \beta, \gamma, \delta$.}
\label{fig:tetrahedral_group}
\end{figure}

Note that the group $T$ is isomorphic to the 
alternating group $A_4$. 
The following is the list of all subgroups of $T$. 
\begin{enumerate}[label=(\arabic*)]
\item \textbf{(order $1$)}  $\{ 1 \} = \langle 1 \rangle$. 
\item \textbf{(order $2$)} $\langle i \rangle  \cong \Z / 2 \Z$, $\langle j \rangle$, $\langle k \rangle$.
\item \textbf{(order $3$)} $\langle \alpha \rangle  \cong \Z / 3\Z$, $\langle \beta \rangle$, $\langle \gamma \rangle$, $\langle \delta
\rangle$.
\item \textbf{(order $4$)} $
\langle i, j \rangle = \langle j, k \rangle = 
\langle i, k \rangle = \langle i, j, k \rangle 
\cong \Z_2 \times \Z_2$.
\item \textbf{(order $12$)} $\langle i, \alpha \rangle = \langle i, \beta \rangle = \langle i, \gamma \rangle = \langle i, \delta \rangle = \langle j, \alpha \rangle = \cdots = \langle k, \delta \rangle = T$.
\end{enumerate}

The subgroups within each item of the above list are conjugate to each other. 
Thus, $\langle i, j, k \rangle$ is the only non-trivial normal subgroup of $T$, and there are in total five conjugacy classes of subgroups. 
Using the symbols introduced in Section \ref{sec:Mathematical model of global defects}, 
this can be summarized as 
\[
\SS_T^\fg/\text{conjugacy} = 
\{ \{ 1 \},\ 
\langle i \rangle,\ 
\langle \alpha \rangle,\ 
\langle i, j, k \rangle,\ 
 T \} . 
\]
Therefore, up to basepoint-preserving homotopy, there are exactly nine non-trivial global defects in tetrahedral nematic liquid crystals, 
corresponding to the nine subgroups of $T$ other than $\{1\}$. 
Furthermore, up to the basepoint-free version of homotopy, the non-trivial global defects in this ordered media are classified into only four classes, with their representative colored diagrams shown in Figure~\ref{fig:classification_correspondence_tetrahedral_group}. 
\begin{figure}[htbp]
\centering\includegraphics[width=10cm]{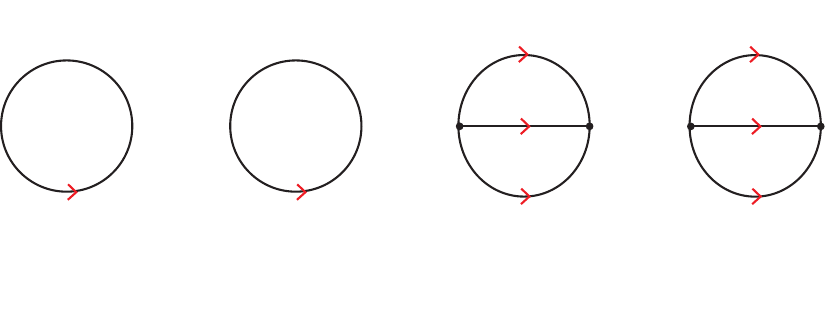}
\begin{picture}(400,0)(0,0)
\put(80,41){\color{red} $i$}
\put(158,41){\color{red} $\alpha$}
\put(235,107){\color{red} $i$}
\put(235,83){\color{red} $j$}
\put(235,39){\color{red} $k$}
\put(315,107){\color{red} $i$}
\put(315,83){\color{red} $\alpha$}
\put(314,39){\color{red} $\delta^2$}

\put(80,21){$\updownarrow$}
\put(158,21){$\updownarrow$}
\put(235,21){$\updownarrow$}
\put(315,21){$\updownarrow$}

\put(78,0) {$\langle i \rangle$}
\put(155,0){$\langle \alpha \rangle$}
\put(225,0){$\langle i, j, k \rangle$}
\put(315,0){$T$}
\end{picture}
\caption{Representatives of each
class 
of 
global defects in tetrahedral nematic liquid crystals in $S^3$, up to equivalence. Each corresponds to a conjugacy class of subgroups of the tetrahedral group $T$.}
\label{fig:classification_correspondence_tetrahedral_group}
\end{figure}
Note that, when we considered biaxial nematic liquid crystals, there was no difference between the classification up to basepoint-preserving homotopy and the one corresponding to (free) homotopy, because all subgroups of the quaternion group $Q$ are normal. However, 
for tetrahedral nematic liquid crystals, 
there is a substantial difference between the two classifications.

\end{document}